\documentclass[%
 preprint,
superscriptaddress,
 amsmath,amssymb,
 aps,
 prl,
]{revtex4-1}

\usepackage[english]{babel}
\usepackage{amsmath}
\usepackage{graphicx}
\usepackage{amsthm} %
\usepackage{amsthm} %
\newtheorem{corollary}{Corollary}

\usepackage{color}
\usepackage{braket}
\usepackage{amssymb}
\usepackage{amscd}
\usepackage{wasysym}
\usepackage[utf8]{inputenc}
\usepackage[T1]{fontenc}
\usepackage{ae,aecompl}
\usepackage{hyperref}
\usepackage{bm}

\newtheorem{theorem}{Theorem}

\begin{document}

\title{Inhibiting failure spreading in complex networks}

\author{Franz Kaiser}
  \affiliation{Forschungszentrum J\"ulich, Institute for Energy and Climate Research (IEK-STE), 52428 J\"ulich, Germany}
    \affiliation{Institute for Theoretical Physics, University of Cologne, K\"oln, 50937, Germany}

\author{Vito Latora}
  \affiliation{School of Mathematical Sciences, Queen Mary University of London, London E1 4NS, UK}
    \affiliation{Dipartimento di Fisica ed Astronomia, Universit{\`a} di Catania and INFN, 95123 Catania, Italy}
    \affiliation{The Alan Turing Institute, The British Library, London NW1 2DB, UK}
\author{Dirk Witthaut}%
  \affiliation{Forschungszentrum J\"ulich, Institute for Energy and Climate Research (IEK-STE), 52428 J\"ulich, Germany}
    \affiliation{Institute for Theoretical Physics, University of Cologne, K\"oln, 50937, Germany}

\date{\today}%

\begin{abstract}
In our daily lives, we rely on the proper functioning
of supply networks, from power grids to water transmission systems. A single failure in these critical infrastructures can lead to a complete collapse through a cascading failure mechanism. Counteracting strategies are thus heavily sought after. In this article, we introduce a general framework to analyse the spreading of failures in complex networks and demonstrate that both weak and strong connections can be used to contain damages. We rigorously prove the existence of certain subgraphs, called network isolators, that can completely inhibit any failure spreading, and we show how to create such isolators in synthetic and real-world networks. The addition of selected links  can thus prevent large scale outages as demonstrated for power transmission grids.
\end{abstract}

\maketitle

Complex networked systems are subject to external perturbations,
damages or attacks with potentially catastrophic
consequences  \cite{albert2000error,Yang2017}. The
loss of even a single edge can cause a blackout in a power
grid \cite{Pour06,16redundancy}, the dieback of a biological
network \cite{Sack2008}, or the collapse of an entire ecological
network \cite{Dunne2009}. It is thus essential to understand how the
structure of a network determines its response to perturbations
and its global resilience \cite{Strogatz2001,sahasrabudhe_rescuing_2011,motter_cascade_2004,dsouza_curtailing_2017,lin_self-organization_2018}. Here, we propose a general framework to model the redistribution of flows in a complex network that follows a small and local failure, and we suggest novel and more efficient strategies to improve network resilience. Our findings reveal that propagation of damages can counterintuitively be better limited by adding selected links instead of removing links and can turn very useful to construct more robust networks or to improve existing ones.

The division of a network into weakly coupled parts provides the most intuitive method to inhibit the spreading of failures, thus improving system resilience \cite{Stouffer2011,gilarranz_effects_2017,Hens2019,brummitt_suppressing_2012}. An example is shown in Fig.~\ref{fig:motifs_preventing}(a) for an elementary supply network with two weakly connected modules. The response to an edge failure is strong locally, but it is reduced in the other module of the network which has only few links to the part where the failure happened. A similar effect is observed in a real-world case, the Scandinavian power grid in Fig.~\ref{fig:motifs_preventing}(d). The study of community structures in both natural and man-made systems is an integral part of network science: a variety of methods has been developed to define and identify the weakly connected modules of a network \cite{radicchi_defining_2004,girvan_community_2002,newman_communities_2012}, and the important role of community structures in network dynamics is today well recognised.

\begin{figure*}[tb]
    \begin{center}
    \includegraphics[width=1.0\textwidth]{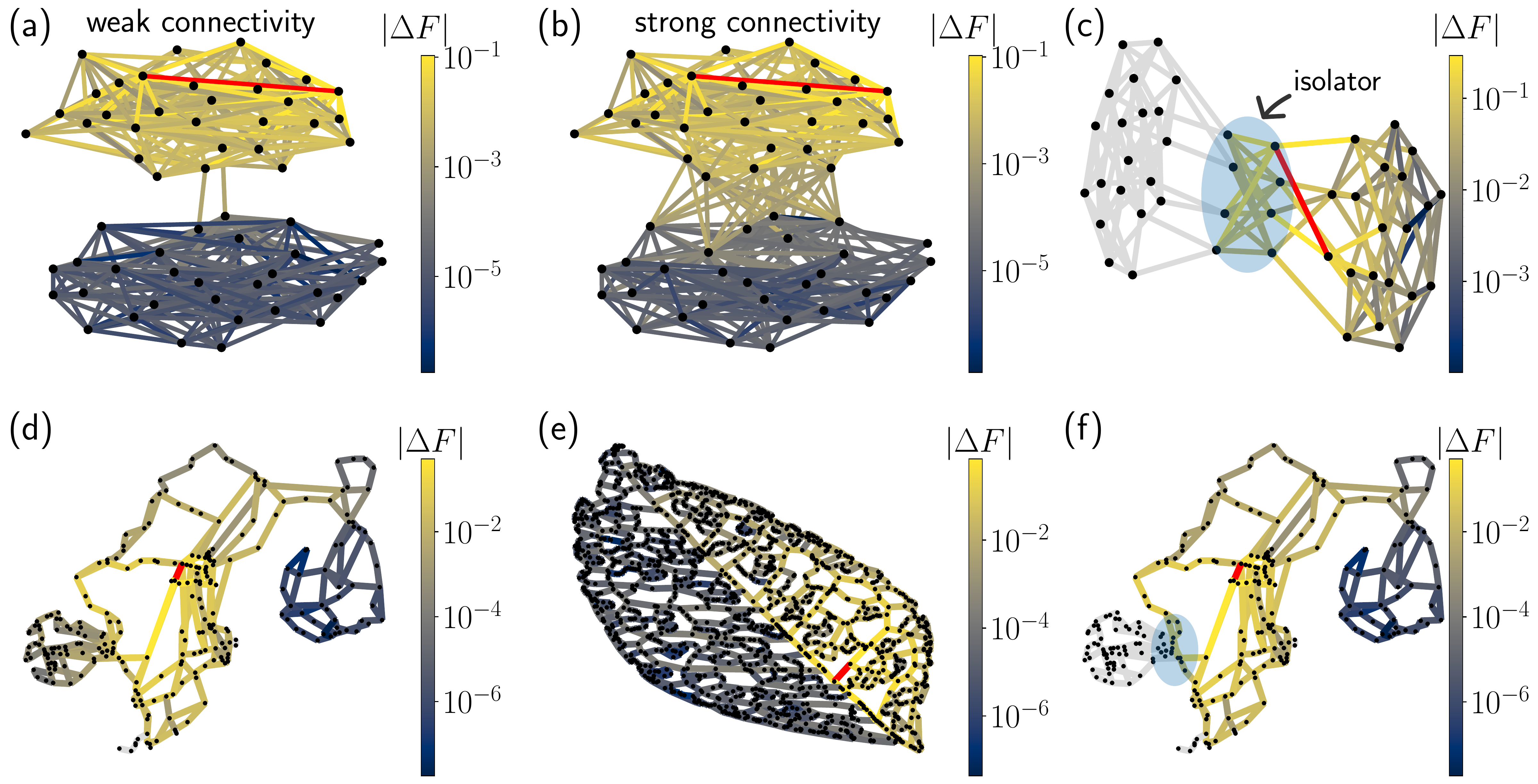}
    \end{center}
    \caption{ \textbf{Different network structures inhibit the spreading of failures in complex networks.} 
    We simulate the impact of a single failing link (red) for different network structures; resulting flow changes are colour coded. (a,b) Both a weak and a strong inter-connectivity can suppress the spreading of failures between two modules of a complex network. 
    (c) Failure spreading is prevented completely by {a network isolator} (blue shading); flow changes on the grey links are exactly zero. 
    (d) The Scandinavian powergrid consists of three weakly connected modules which suppresses failure spreading between the modules~\cite{horsch_2018}.  
    (e) The vascular network of a \textit{Bursera hollickii} leaf contains a strong central vein \cite{ronellenfitsch_topological_2015}, which suppresses failure spreading between the two sides of the leaf. 
    (f) Same as in (d) but with the addition of two links (blue shading) to create a network isolator.
    \label{fig:motifs_preventing} 
    }
\end{figure*}

Limiting connectivity for the sake of additional security is, however, not always desirable. For instance, microgrid concepts and intentional islanding are heavily discussed in energy systems research \cite{Lasseter2002, mureddu2016islanding}, but the overall demand for electric power transmission actually increases \cite{Schlachtberger2017,TRONDLE2020}. Other methods to contain perturbations or damages in complex networks are thus needed. Indeed, an exceptionally strong inter-connectivity between two modules can also suppress failure spreading as shown in Fig.~\ref{fig:motifs_preventing}(b,e). Notably, a strong inter-connectivity can be realised in different ways. In the random network example in Fig.~\ref{fig:motifs_preventing}(b), a high number of links connects a subset of nodes of the two modules. In real vascular networks of leafs, the suppression of failure spreading occurs naturally because the central vein between the left and right parts has an exceptionally large capacity (Fig.~\ref{fig:motifs_preventing}(e), cf. also~ \cite{gavrilchenko_resilience_2019}). 

Remarkably, failure spreading can be completely stopped by certain subgraphs which we refer to as {\em network isolators} in the following, an example being shown in Fig.~\ref{fig:motifs_preventing}(c). The failure of an edge in the right part of the network does not affect the flows in the left part at all. Real world networks can be made perfectly resistant to failure propagation by the ad-hoc addition of few links to create network isolators, as demonstrated for the Scandinavian power grid in Fig.~\ref{fig:motifs_preventing}(f) consisting of three weakly coupled modules. The suppression of failure spreading is readily generalised to networks with more than two modules~\footnote{See Supplemental Material for a proof of the main theorem on network isolators, extended Figures on their applicability, a discussion of their relationship to network controllability and detailed methods, which include Refs.~\cite{newman2010,bollobas1998,Dorfler2018,Wood14,motter_ml_2002,crucitti_model_2004,yuan_exact_2013,van_mieghem_2017,gao_target_2014,liu_controllability_2011,Norman97,Rohden2012,nishikawa2015,manik_supply_2014,rodrigues_kuramoto_2016,wurbs2001,reichold_vascular_2009}
}. 

Our results are based on a general framework that allows a theoretical analysis of the interplay of network connectivity and robustness in the context of supply or transportation networks. Many such systems can in fact be modelled by linear flow networks where the flow over an edge $(i,j)$ depends linearly on the gradient of a potential function across the edge, $F_{i \rightarrow j} = K_{ij} \cdot(\vartheta_i - \vartheta_j)$. In particular, this description applies to power transmission grids \cite{Purc05}, where $F$ is the real power flow, $\vartheta_i$ denotes the nodal voltage phase angle and $K_{ij}$ is given by the line susceptance, and to vascular networks \cite{Kati10,coomes_scaling_2008}, where $F$ is the flow of water or nutrients, $\vartheta_i$ is the local pressure and $K_{ij}$ the edge's capacity (see Supplemental Material \cite{Note1}). The generalisation to nonlinear systems will be discussed below. Furthermore, equivalent problems arise in the linearisation of general diffusively coupled networks of dynamical systems around an equilibrium or limit cycle~ \cite{Manik2017}.

The impact of a damage in linear flow networks can be calculated analytically. Assume that an edge $\ell=(r,s)$ fails, and summarise the response at all nodes $i=1,\ldots,N$ by the vector of changes $\Delta \vec \vartheta = (\Delta \vartheta_1, \ldots,\Delta \vartheta_N)^\top$. One then finds \cite{Note1}
\begin{equation}
    \bm{L} \, \Delta \vec \vartheta = q_{\ell} \vec \nu_{\ell},
    \label{eq:dipole1}
\end{equation}
where $\bm{L}$ is the Laplacian matrix of the weighted network, $\vec \nu_{\ell}$ is a vector with  $+1$ at position $r$ and $-1$ at position $s$, and $q_{\ell} = 1 - K_{rs} \vec \nu_{\ell}^\top \bm{L}^{-1} \vec \nu_{\ell}$ is a source strength~\mbox{ \cite{strake2018}}. We thus find that the response of a network to failures is essentially determined by the Laplacian $\bm{L}$. This matrix incorporates the network structure and is defined as follows;
\begin{equation}
    L_{ij}   =\left\{\begin{array}{l l }
      -K_{ij} & \; \mbox{if $i$ is connected to $j$},  \\
      \sum_{(i,k)\in E(G)}K_{ik} & \; \mbox{if $i=j$},  \\
      0     & \; \mbox{otherwise}.
  \end{array} \right. \label{eq:Laplacian}
\end{equation}
Notably, the Laplacian matrix is also useful to infer the large scale connectivity and the community structure of a given network~ \cite{fortunato_community_2010}. 

To quantify the effect of connectivity on failure spreading, we have studied the impact of different failures in a variety of synthetic networks as well as in several real-world networks. For a given initial failure of an edge $\ell$, we compute the flow changes $\Delta F_{i\rightarrow j}=K_{ij}\cdot(\Delta \vartheta_i-\Delta\vartheta_j)$ for all edges $(i,j)$ in a given subgraph $G'$ of the network. Furthermore, we must take into account that the impact of a failure generally decreases with distance~ \cite{strake2018,kaiser_collective_2020,Kett15}.  As an overall measure of the impact of a failure we thus consider the expression $\langle|\Delta F_{i\rightarrow j} | \rangle_d^{(i,j)\in G'}$, which gives the magnitude of flow changes averaged over all edges $(i,j) \in G'$ at a given distance $d$ to the edge $\ell$. The prime question is now whether the impact differs substantially between the communities or moduli of a network. We thus plot the ratio
\begin{equation}
R(\ell,d)=\frac{\langle|\Delta F_{i\rightarrow j} | \rangle_d^{(i,j)\in \text{O}}}{\langle
  | \Delta F_{i\rightarrow j} |
  \rangle_d^{(i,j)\in \text{S}}} \, .
\label{eq:flowratio}
\end{equation}
between the remote part of the network $G'=\text{O}$ and the part $G'=\text{S}$ containing the failing edge $\ell$. If this ratio approaches or reaches zero, this is indicative of a very strong suppression of failure spreading into the other part of the network.

\begin{figure}[tb]
    \begin{center}
    \includegraphics[width=0.7\columnwidth]{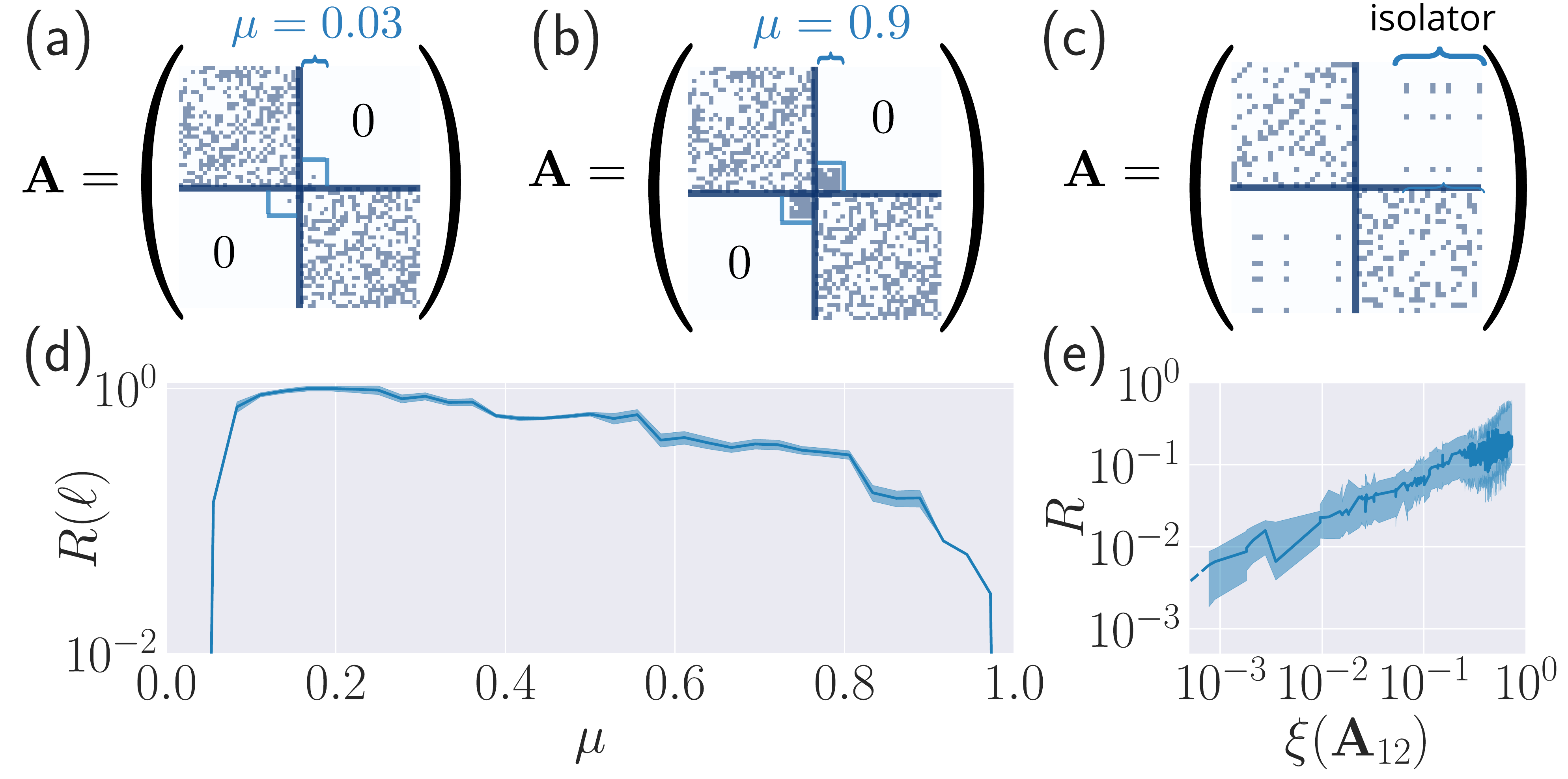}
    \end{center}
    \caption{
    \textbf{Effectiveness and robustness of shielding network structures.}
    (a,b) Adjacency matrices for the graphs shown in Fig.~\ref{fig:motifs_preventing}(a,b). Two random graphs $G(30,0.4)$ are inter-connected via a fraction $c$ of their nodes chosen at random, and links are added with probability $\mu$, interpolating between weak (a) or strong (b) inter-connectivity. 
    (c) Adjacency matrix for the 6-regular graph shown in Fig.~\ref{fig:motifs_preventing}(c) containing a network isolator.
    (d) The average ratio of flow changes $R(\ell)$ in the two components (Eq.~\ref{eq:flowratio}) is strongly suppressed for both high and low inter-connectivity $\mu$. The blue line represents the median value over all distances and the shaded region indicates the $0.25$- and $0.75$-quantiles.
    (e) The ratio of flow changes $R$ vanishes for a perfect network isolator described by the condition $\xi(\mathbf{A}_{12})=0$ and increases algebraically with coherence parameter $\xi$ when perturbed.
    \label{fig:motifs_robustness}
    }
\end{figure}

\begin{figure}[tb]
    \begin{center}
        \includegraphics[width=0.7\columnwidth]{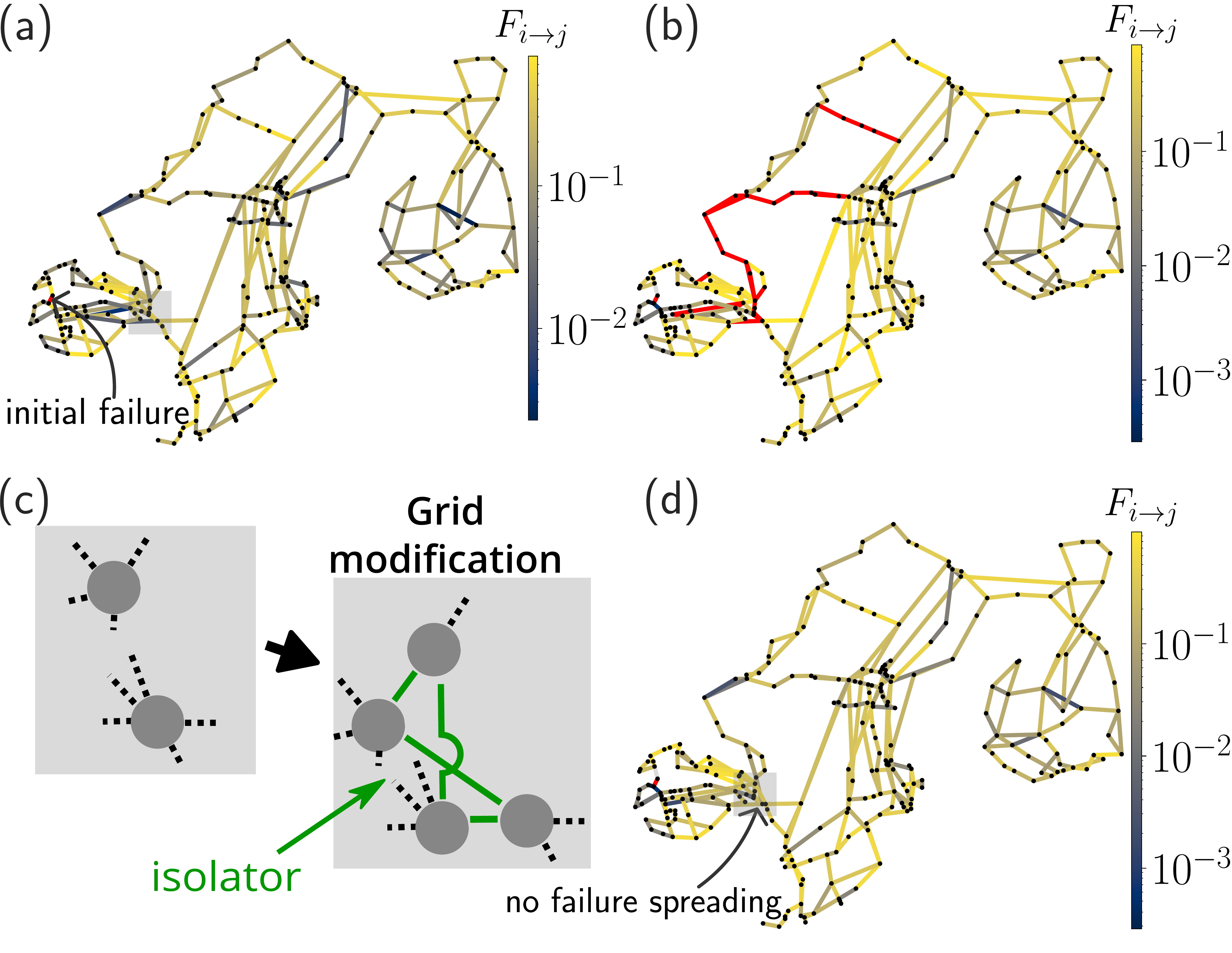}
    \end{center}
    \caption{
    \textbf{Network isolators can contain cascading failures in power grids.}
    (a) Line loading (colour code) on the Scandinavian grid in units relative to maximal loading before the initial failure of a single line (coloured red).
    (b) The initial failure results in a cascade of overloads (red coloured lines) until the grid disconnects into several parts. 
   (c) Magnification of the grid structure in Eastern Norway. A small modification of the grid enables a network isolator.
   (d) Introducing the network isolator completely suppresses the spreading of failures from Western Norway to the rest of the grid thus inhibiting the cascade observed in (b). 
    \label{fig:gridisolator}
    }
\end{figure}

To study how the impact of failure spreading depends on the network structure, we considered synthetic graphs obtained by connecting two Erd\H{o}s-R\'{e}nyi (ER) random graphs to each other at preselected, randomly chosen vertices with a tunable probability $\mu\in [0,1]$~\cite{Erdos1960,Note1}. The resulting graphs have a connectivity structure ranging from two weakly connected communities for low values of $\mu$ shown in Figure~\ref{fig:motifs_preventing}(a) to strongly connected parts shown in panel (b). In the limit $\mu=1$, the two modules are connected via a complete bipartite graph as shown in Fig.~\ref{fig:motifs_preventing}(c). This is a possible realization of a \textit{network isolator}, since it completely suppresses flow changes as we will explain in the following. The corresponding adjacency matrices clearly indicate the connectivity structure, revealing the strong or weak coupling between the two modules of the networks (Fig.~\ref{fig:motifs_robustness}(a,b,c)). Remarkably, evaluating the quantity $R(\ell)$, obtained by averaging the ratio over flow changes $R(\ell,d)$ over all distances $d$ for a specific trigger link $\ell$, for a varying connectivity structure tuned by $\mu$, we find that the spreading of failures is largely suppressed for both weak and strong connectivity between the two modules as shown in Fig.~\ref{fig:motifs_robustness}(d). Distance plays a minor role for the ratio of flow changes $R(d)$ as illustrated in a separate Figure in the Supplemental Material~\cite{Note1}.

Network symmetries are known to play an important role for the dynamics and synchronisability of a network~ \cite{pecora_cluster_2014,sorrentino_complete_2016,nicosia_remote_2013}. Network isolators as a specific connectivity structure completely inhibit the spreading of failures from one network module to another. They manifest also as particular, symmetric patterns in the region of the adjacency matrix describing the connectivity between the two parts of the network as we have seen in Fig.~\ref{fig:motifs_robustness}(c). They are characterised by the following theorem, which we prove in the Supplemental Material~\cite{Note1}
\begin{theorem}
Consider a linear flow network composed of two modules 1,2 and let $\mathbf{A}_{12}$ denote the weighted adjacency matrix of the mutual connections. An edge failure in one module does not affect the flows in the other module if $\operatorname{rank}(\mathbf{A}_{12}) = 1$. For unweighted networks this criterion is fulfilled if $\mathbf{A}_{12}$ describes a complete bipartite graph.
\end{theorem}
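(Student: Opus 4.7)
The plan is to write equation~(\ref{eq:dipole1}) in block form with respect to the bipartition of nodes into modules $1$ and $2$, and then read off the constraint it imposes on $\Delta\vec\vartheta$ restricted to module $1$. Partitioning $\Delta\vec\vartheta = (\Delta\vec\vartheta_1,\Delta\vec\vartheta_2)^\top$, the Laplacian defined in~(\ref{eq:Laplacian}) splits as
\begin{equation}
\bm{L} = \begin{pmatrix} \bm{L}_1 + \bm{D}_{12} & -\mathbf{A}_{12} \\ -\mathbf{A}_{12}^\top & \bm{L}_2 + \bm{D}_{21} \end{pmatrix},
\end{equation}
where $\bm{L}_k$ is the Laplacian of the subgraph induced on module $k$ and $\bm{D}_{12}$, $\bm{D}_{21}$ are the diagonal matrices whose entries are the row, respectively column, sums of $\mathbf{A}_{12}$. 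Assume without loss of generality that the failing edge lies in module $2$, so that $q_\ell\vec\nu_\ell$ is supported only on the second block. The top block of~(\ref{eq:dipole1}) then becomes the homogeneous relation
\begin{equation}
\bigl(\bm{L}_1 + \bm{D}_{12}\bigr)\,\Delta\vec\vartheta_1 \;=\; \mathbf{A}_{12}\,\Delta\vec\vartheta_2,
\label{eq:block}
\end{equation}
and it suffices to show that the rank-one hypothesis forces $\Delta\vec\vartheta_1$ to be a constant vector, since then every intra-module flow change $K_{ij}(\Delta\vartheta_i-\Delta\vartheta_j)$ with $i,j$ in module $1$ vanishes.

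The next step is to exploit $\operatorname{rank}(\mathbf{A}_{12})=1$ by factoring $\mathbf{A}_{12}=\vec u\,\vec v^\top$ with nonnegative vectors $\vec u,\vec v$, which is possible because $\mathbf{A}_{12}$ is entrywise nonnegative. The right-hand side of~(\ref{eq:block}) then collapses to the scalar multiple $(\vec v^\top\Delta\vec\vartheta_2)\,\vec u$. The key identity I would establish is
\begin{equation}
\bigl(\bm{L}_1 + \bm{D}_{12}\bigr)\,\vec 1 \;=\; \mathbf{A}_{12}\,\vec 1 \;=\; (\vec v^\top\vec 1)\,\vec u,
\end{equation}
which is immediate from $\bm{L}_1\vec 1=0$ together with the observation that the diagonal of $\bm{D}_{12}$ records exactly the row sums of $\mathbf{A}_{12}$. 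Assuming module $1$ is connected and truly coupled to module $2$ (otherwise the statement is vacuous), the grounded Laplacian $\bm{L}_1+\bm{D}_{12}$ is positive definite and thus invertible, so this identity yields $(\bm{L}_1+\bm{D}_{12})^{-1}\vec u = (\vec v^\top\vec 1)^{-1}\vec 1$. Substituting into~(\ref{eq:block}) gives $\Delta\vec\vartheta_1 = \bigl((\vec v^\top\Delta\vec\vartheta_2)/(\vec v^\top\vec 1)\bigr)\vec 1$, a constant vector, as required.

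For the unweighted complete bipartite corollary, $\mathbf{A}_{12}$ is the all-ones matrix, which factors as $\vec 1\,\vec 1^\top$ and hence has rank one, so the general statement applies directly. The main technical point I expect to have to justify carefully is the invertibility of the grounded Laplacian $\bm{L}_1+\bm{D}_{12}$; this should follow from a standard argument that the intra-module Laplacian is positive semidefinite with kernel spanned by $\vec 1$, and that the diagonal perturbation $\bm{D}_{12}$ has at least one strictly positive entry, breaking the would-be null direction by irreducible diagonal dominance. Everything else is linear algebra on block matrices, and the proof can be stated symmetrically by interchanging the roles of modules $1$ and $2$.
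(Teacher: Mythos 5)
Your proof is correct and follows essentially the same route as the paper's: both reduce the dipole equation to the block corresponding to the unperturbed module, use the rank-one structure of $\mathbf{A}_{12}$ to show that the cross-coupling forcing is proportional to the cross-degree vector, and conclude that the potentials there shift by a common constant, with uniqueness supplied by positive definiteness of the grounded Laplacian. The only difference is presentational: the paper left-multiplies by the inverse cross-degree matrix and verifies the constant solution componentwise, whereas you factor $\mathbf{A}_{12}=\vec u\,\vec v^{\top}$ and invert directly via the identity $(\bm{L}_1+\bm{D}_{12})\vec 1=\mathbf{A}_{12}\vec 1$.
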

Note that, while network isolators prevent failure spreading, we found that they do not influence network controllability~\cite{Note1}.

\begin{figure*}[h!]
    \begin{center}
    \includegraphics[width=1.0\textwidth]{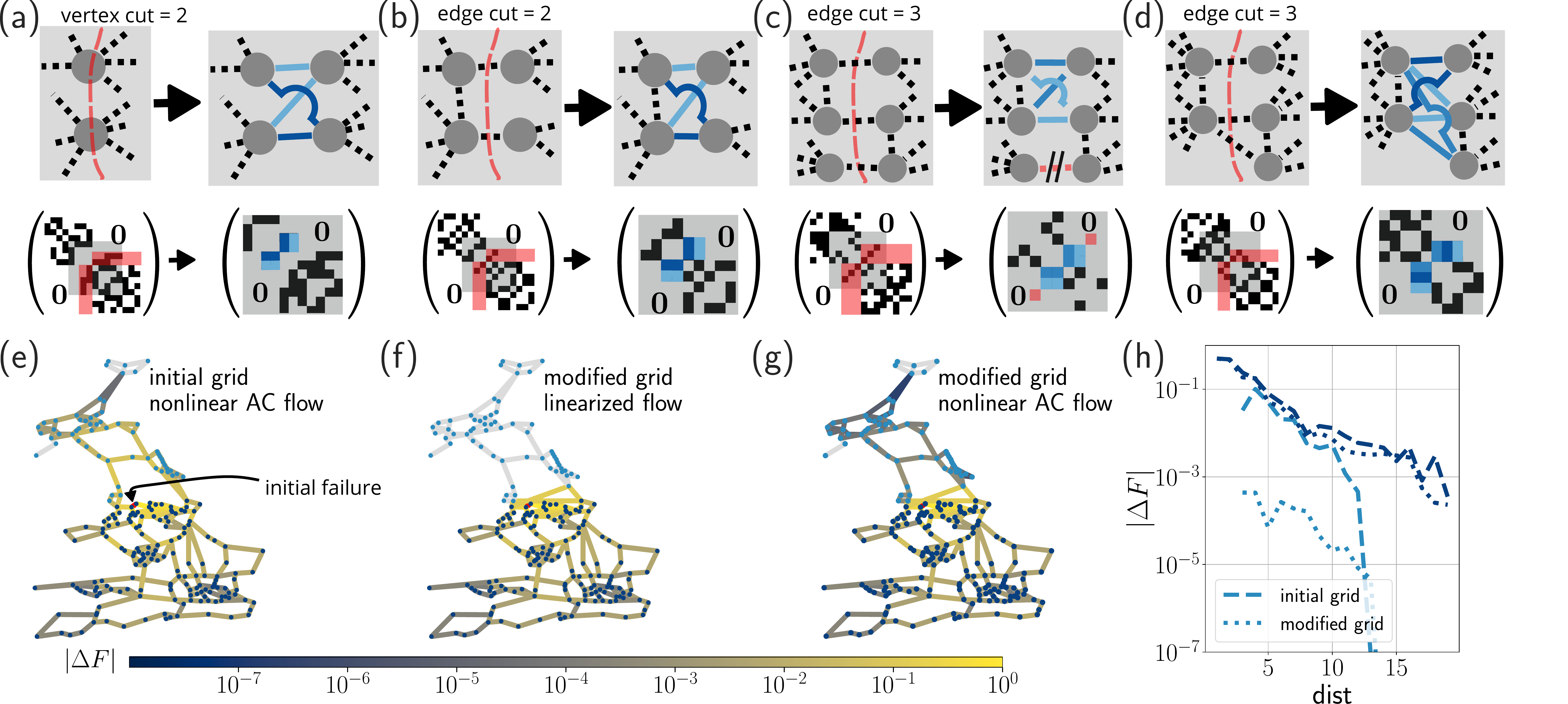}
    \end{center}
    \caption{ 
    {\textbf{Different ways of constructing isolators and their  effects in full non-linear AC load flow.}
    (a) to (d) Alternative methods of creating an isolator in a given network. We show the network structure before (top left) and after (top right) the addition of a network isolator, as well as corresponding adjacency matrices (bottom) with the different shades of blue representing the capacity $K_{ij}$ of the respective edge.
    A lower prior connectivity simplifies the creation of isolators as measured by the vertex cut (a) or edge cut (b,c,d)
    which is visible in the adjacency matrix (entries colored red). The creation of network isolators results in characteristic patterns in the adjacency matrix in terms of the capacities of the isolator edges (shades of blue).
    (e) An initially failing link with unit flow (red) in the British grid results in changes of real power flow (colour code) throughout the whole network, as obtained by computing a non-linear full AC power flow~\cite{Note1,horsch_2018}. (f,g) After introducing a network isolator based on the strategy presented in panel (a), failure spreading is perfectly inhibited in the linear power flow approximation, and still significantly reduced in the non-linear full AC load flow. (h) Median absolute flow changes are significantly lower in the upper module of the grid shielded by the isolator (light blue, straight line) compared to the situation before introducing the isolator (light blue, dotted line), whereas flow changes in the lower module are almost the same (dark blue lines).
    }
    \label{fig:recipes_and_nonlinear_isolator}
    }
\end{figure*}

Since most real world examples of networks do not contain perfect network isolators, we have studied the robustness of a network isolator against modifications of the topology.  Starting from a unit rank matrix, we perturb the adjacency matrix $\mathbf{A}_{12}$ iteratively. In each step we choose one of the matrix columns $\vec a_i, i=1,\ldots,m$ at random and perturb it according to $\vec{a}_i^\prime = \vec{a}_i+ \vec{e} \lVert\vec{a}_i\rVert$. The elements of the perturbation vector $\vec{e}$ are chosen uniformly at random from the interval $[-\beta,\beta]$, where $\beta$ is a small parameter, here $\beta=0.05$. The deviation of the perturbed matrix $\mathbf{A}_{12}$ from a unit rank matrix is quantified using its coherence statistics~\cite{tropp_computational_2010}, 
\begin{equation}
    \xi(\mathbf{A}_{12}) = 1-\min_{i,j}\frac{\langle \vec{a}_i,\vec{a}_j\rangle}{\lVert\vec{a}_i\rVert\lVert\vec{a}_j\rVert} \, .
\end{equation}
The performance of the isolator is then measured by calculating the ratio of flow changes $R$, which is obtained from $R(\ell,d)$ by averaging over all possible trigger links and distances. A perfect isolator is characterised by  $\xi(\mathbf{A}_{12})=0$ and enables a complete containment of failure spreading such that $R=0$. For perturbed isolators, we find that $R$ increases approximately algebraically with $\xi(\mathbf{A}_{12})$, see Fig.~\ref{fig:motifs_robustness}(e). Hence, the isolation effect persists for small perturbations, albeit with reduced efficiency. 

Perfect network isolators can be easily constructed to improve the resilience of complex networked systems. As a practical example we show an application to electric power grids, where large scale blackouts are typically triggered by the outage of a single transmission element which leads to a cascade of failures \cite{Pour06,Schafer2018}. We demonstrate the impact of network isolators against cascading failures in the case of the Scandinavian grid. In the original grid layout, failure spreading between different areas of the grid is reduced due to the presence of only few connections between different areas of the network -- but it is possible. A failure in one area can thus spread to other areas and cause a global cascade, as demonstrated in Figure \ref{fig:gridisolator}(a,b) for a cascade emerging in Western Norway. This spreading may in principle be prevented by decoupling different areas of the grid, but this is highly undesired. In fact, future energy systems will require more connectivity, not less, to transmit renewable electric power \cite{Schlachtberger2017,TRONDLE2020}. In contrast, building a network isolator can completely inhibit failure spreading at increased connectivity. A perfect isolator can be realised with moderate effort by reconstructing two substations in Norway, such that they effectively form two nodes each. The new nodes must be linked by internal connections and one additional two-circuit overhead line, whose parameters are optimised such that the condition $\operatorname{rank}(\mathbf{A}_{12}) = 1$ is satisfied (Fig.~\ref{fig:gridisolator}(c)). A simulation for such an optimised grid layout shows that the spreading of the cascade is completely suppressed (Fig.~\ref{fig:gridisolator}(d)). The grid remains connected and load shedding is no longer necessary as a containment strategy \cite{Pour06,Yang2017}.

Network isolators are not limited to the particular situation shown in Figure~\ref{fig:motifs_preventing}, or to linear flow models only. In Figure~\ref{fig:recipes_and_nonlinear_isolator}(a,d) we identify several subgraphs that allow to easily introduce network isolators into existing topologies. For subgraphs with a prior low connectivity, as measured by a small vertex cut (Fig.~\ref{fig:recipes_and_nonlinear_isolator}(a)) or a small edge cut (Fig.~\ref{fig:recipes_and_nonlinear_isolator}(b,c,d)), network isolators may be introduced with small grid modifications - by adding (a,b,d) or removing and adding (c) selected links. The concept of network isolators has been established for linear flow networks, but can be generalised in two ways. (1) We can rigorously proof the existence of isolators for an important class of nonlinear network dynamical systems \cite{Note1}. (2) For many nonlinear systems of practical importance, the impact of failures or perturbations is well described by a linearisation around an equilibrium or limit cycle \cite{Manik2017}. We demonstrate that the strong isolation of line outages in power grids persists beyond the linearised flow equations via direct numerical simulations in Fig.~\ref{fig:recipes_and_nonlinear_isolator}(e to h).

In conclusion, connectivity determines the resilience of complex networks in manifold ways. As expected, a division of a network into weakly coupled modules suppresses the spreading of failures from one module to the others. Remarkably, we have found that a strong inter-connection can equally well suppresses the spreading in both flow networks and in networks of nonlinear dynamical systems. We have demonstrated that an even stronger effect can be created by certain subgraphs called isolators, which inhibit the spreading completely. Isolators can be applied to mitigate cascading failures, for instance in electric power grids, while enabling an arbitrary degree of connectivity between the grid parts. These results widen our perspective on the large scale organisation of complex networks in general, showing that very diverse structural patterns can exist that discriminate functional units and improve network resilience. 

We thank Tom Brown and Jonas H\"orsch for help with the processing of power grid data, Eleni Katifori and Henrik Ronellenfitsch for providing the leaf data and Raissa D'Souza and Jürgen Kurths for valuable discussions. We gratefully acknowledge support from the Federal Ministry of Education and Research (BMBF grant no. 03EK3055B D.W.), the Helmholtz Association (via the joint initiative "Energy System 2050 -- a Contribution of the Research Field Energy" and grant no. VH-NG-1025 to D.W.). V.L.'s work was funded by the Leverhulme Trust Research Fellowship CREATE.

\bibliography{bibliography}

\newpage
\clearpage
\begin{center}\large{\textbf{Supplemental Material}}\end{center}
\normalsize
\section{Flow networks}
\label{sec:LinFlow}
In this section, we briefly review the theory and applications of linear flow networks.

\subsection{Mathematical description}
\label{sec:MathematicsLinFlow}

In this work, the main model of interest is a linear flow network model which we introduce more formally in this section. Consider a connected graph $G=(E,V)$ consisting of $N=|V|$ nodes and $L=|E|$ edges. Assign to each node in the network a potential $\vartheta_n\in\mathbb{R},~n\in V(G)$ and to each edge a capacity $K_{ij}\in\mathbb{R}^{+},~\ell=(i,j) \in E(G)$. Now we assign a \textit{flow} $F_{i\rightarrow j}\in\mathbb{R}$ to each link $\ell=(i,j)\in E(G)$ in the network that is assumed to be linear in the potential drop
\begin{equation}
    F_{i\rightarrow j}=K_{ij}\cdot(\vartheta_i-\vartheta_j) =-F_{j\rightarrow i}.
    \label{eq:linear_flow}
\end{equation}
 Suppose that there are sources and sinks attached to the nodes of the networks $P_i\in\mathbb{R},~i \in V(G)$. In this case, the in- and outflows at each node have to balance with the sources and sinks
\begin{equation}
    P_i=\sum_{k=1}^N F_{i\rightarrow k}.
    \label{eq:dc_approx}
\end{equation}
This equation is known as continuity equation or \textit{Kirchhoff's current law}. If the sources and sinks $P_i$ are given, Eqs.~\eqref{eq:dc_approx} and~\eqref{eq:linear_flow} completely determine the potentials in the network (up to a constant shift to all potentials). In a power grid, the sources and sinks are the power injections or withdrawals as a result of power production or consumption. When looking at the stable, operational fixed point of a power grid they are balanced such that $\sum_iP_i=0$ -- we therefore assume this to hold in the following sections. The theory of linear flow networks applies resistor networks, as well as AC power grids in the DC approximation, hydraulic networks and networks of limit cycle oscillators, which will be discussed in detail in section~\ref{sec:application}.

Now we introduce a compact, vectorial notation which facilitates the analysis of perturbations or damages to the network. Note that the flow is a signed quantity that depends on the orientation of the edges that we arbitrarily fix for this purpose and say that the flow is directed from node $i$ to node $j$ in this case. We can write the flows in vectorial notation $\vec{F}=(F_{1},...,F_L)^\top\in\mathbb{R}^{L}$ as follows;
\begin{equation}
    \vec{F}= \bm{K}\bm{I}^\top \vec{\vartheta}.
    \label{eq:flows_vectorial}
\end{equation}
Here, $\bm{K}=\operatorname{diag}(K_1,...,K_L)\in\mathbb{R}^{L\times L}$ is the graph's weight matrix that collects the edge weights and $\bm{I}^\top$ is the transpose of the the graph's edge-node incidence matrix $\bm{I}\in\mathbb{Z}^{N\times L}$ that determines the orientation of the graph's edges by the following relationship
\begin{equation}
    I_{j\ell} = \left\{
   \begin{array}{r l }
      +1 & \; \mbox{if edge $\ell  \,  \hat{=} \, (j,k)$ starts at node $j$},  \\
      -1 & \; \mbox{if edge $\ell  \,  \hat{=} \, (j,k)$ ends at node $j$ },  \\
      0     & \; \mbox{otherwise}.
  \end{array} \right. 
  \label{eq:ne_incidence}
\end{equation}
Furthermore, $\vec{\vartheta}=(\vartheta_1,...,\vartheta_N)^\top\in\mathbb{R}^N$ is a vector of potentials or voltage phase angles. We can also define a vector of power injections $\vec{P}=(P_1,...,P_N)^\top\in\mathbb{R}^N$ such that the continuity equation reads as
\begin{equation}
    \vec{P}=\bm{I}\vec{F}.
    \label{eq:Kirchhoff}
\end{equation}
In this expression, the correspondence between the power balance and \textit{Kirchhoff's current law} becomes manifest: it states that the in- and outflows at each node have to balance the injections and withdrawals of power. Combining Equations~\eqref{eq:flows_vectorial} and~\eqref{eq:Kirchhoff}, we may find a relationship between angles $\vec{\vartheta}$ and power injections $\vec{P}$, thus defining the graph's weighted \textit{Laplacian matrix} $\bm{L}=\bm{I}\bm{K}\bm{I}^\top\in\mathbb{R}^{N\times N}$, by
\begin{equation}
    \vec{P}=\bm{I}\bm{K}\bm{I}^\top\vec{\vartheta}=\bm{L}\vec{\vartheta}.
    \label{eq:vector_dc}
\end{equation}
The weighted Laplacian matrix used here has the following entries 
\begin{equation}
    L_{ij}   =\left\{\begin{array}{l l }
      -K_{\ell} & \; \mbox{if $i$ is connected to $j$ via $\ell = (i,j)$},  \\
      \sum_{\ell=(i,k)\in E(G)}K_{\ell} & \; \mbox{if $i=j$},  \\
      0     & \; \mbox{otherwise}.
  \end{array} \right. 
\end{equation}
The Laplacian matrix plays an important role in graph theory~\cite{newman2010}. If the underlying graph $G$ is connected, it has one zero eigenvalue $\lambda_1=0$ with corresponding eigenvector $\vec{v}_1=\vec{1}/\sqrt{N}$. Therefore, the matrix is not invertible. In many cases, it would nevertheless be desirable to invert the matrix, e.g. in order to find the phase variables given the power injections in Equation~\eqref{eq:vector_dc}. This problem is typically overcome by making use of the matrix's \textit{Moore-Penrose pseudoinverse} $\bm{L}^\dagger$. It may be used to invert Equation~\eqref{eq:vector_dc} in the same way as for the ordinary matrix inverse in the case of balanced power injections~\cite{van_mieghem_2017}. The Moore-Penrose pseudoinverse of the graph Laplacian $\bm{L}$ allows for the following representation: using $\bm L$'s eigenvalues sorted by magnitude $\lambda_1=0,\lambda_2\leq...\leq\lambda_N$ with corresponding eigenvectors $\vec{v}_1= \vec{1}/\sqrt{N},\vec{v}_2,...,\vec{v}_N$, we can express its pseudoinverse $\bm{L}^\dagger$ as \cite{Dorfler2018}
\begin{equation}
    \bm{L}^\dagger=\left(\vec{v}_1,\vec{v}_2,...,\vec{v}_N\right) 
    \begin{pmatrix} 
    0&0&...&0\\
    0&\lambda_2^{-1}&...&0\\
    ...&...&...&...\\
    ...&...&...&\lambda_N^{-1}
    \end{pmatrix}
    \left(\vec{v}_1,\vec{v}_2,...,\vec{v}_N\right)^\top\nonumber.
\end{equation}
The second eigenvalue $\lambda_2$ is usually referred to as \textit{Fiedler eigenvalue} or \textit{algebraic connectivity} and is an indicator of the graph's overall connectivity. If we assume the overall graph to be connected, this eigenvalue is strictly greater than zero $\lambda_2>0$. Importantly, a large difference between second and third eigenvalue $\lambda_3-\lambda_2$ implies a strong modularity in the graph and thus indicates the existence of a community structure~\cite{radicchi_defining_2004,girvan_community_2002,newman_communities_2012}.

Before we proceed, let us briefly fix the notation for the following sections: we will refer to an edge $\ell = (\ell_1,\ell_2)\in E(G)$ and its index $\ell$ in the ordered set of all edges interchangeably or refer to it by its terminal nodes $\ell_1$ and $\ell_2$. If we assume the edge space to be spanned by vectors in the two element field $GF(2)$, we may express the edge by a unit vector $\vec{l}_\ell=(0,...,\underbrace{1}_{l},..0)^\top\in GF(2)^{L}$ which we refer to as the edge's indicator vector. The edge-node incidence matrix $\bm{I}$ then maps this unit vector to the corresponding unit vectors in the field of vertices $GF(2)^N$. We thus get the following result for the edge expressed in terms of its starting vertex $\ell_1$ and terminal vertex $\ell_2$: 
\begin{equation}
    \vec{\nu}_\ell=\bm{I}\cdot \vec{l}_\ell = \vec{e}_{\ell_1}-\vec e_{\ell_2}=
    \begin{pmatrix}
    0\\
    ...\\
    1\\
    ...\\
    -1\\
    ...
    \end{pmatrix}
    \begin{matrix} \\ \\  \}\ell_1\\ \\ \}\ell_2\\  \\\end{matrix},\quad
    \nonumber
\end{equation}
where $\vec{e}_{\ell_1}$ and $\vec{e}_{\ell_2}$ are basis vectors in $GF(2)^N$
 \begin{equation}
 \begin{aligned}
    \vec{e}_{\ell_1} =    
    \begin{pmatrix}
    0\\
    ...\\
    1\\
    ...\\
    ...\\
    0
    \end{pmatrix}
    \begin{matrix} \\  \}\ell_1\\ \\ \\\end{matrix}
    ,\quad
        \vec{e}_{\ell_2} = 
        \begin{pmatrix}
    0\\
    ...\\
    ...\\
    1\\
    ...\\
    0
    \end{pmatrix}
    \begin{matrix} \\ \\ \\ \}\ell_2\\ \\  \\\end{matrix}
.\nonumber
    \end{aligned}
 \end{equation}
 This formulation allows us to easily switch between the edges expressed in edge space and the nodes corresponding to its terminal ends.

\subsection{Applicability of linear flow models}
\label{sec:application}
The theoretical framework in the last section has many different applications. We will demonstrate its applicability to the following systems in this section:
\begin{enumerate}
    \item Power grids,~\cite{Wood14,strake2018}
    \item Resistor networks~\cite{bollobas1998},
    \item Hydraulic networks~\cite{Sack2008,Kati10},
    \item Limit cycle oscillators~\cite{Manik2017}.
\end{enumerate}

\subsubsection{Application to power grids}
\label{sec:powergrids}
The power flow equations describing the steady state of a power system at an arbitrary node $i$ are given by~\cite{Wood14}
\begin{equation}
    \begin{aligned}
    P_i&=\sum_{k=1}^N |V_i||V_k|(G_{ik}\cos(\vartheta_i-\vartheta_k)+B_{ik}\sin(\vartheta_i-\vartheta_k))\\
    Q_i&=\sum_{k=1}^N|V_i||V_k|(G_{ik}\sin(\vartheta_i-\vartheta_k)-B_{ik}\cos(\vartheta_i-\vartheta_k))
    \label{eq:powerflow}.
\end{aligned}
\end{equation}
Here, $P_i$ and $Q_i$ are the real and reactive power generated or consumed at node or bus $i$, $\vartheta_i$ is the voltage angle at the same bus and $|V_i|$ is the voltage magnitude. The matrices $\bm{G}\in\mathbb{R}^{N\times N}$ and $\bm{B}\in\mathbb{R}^{N\times N}$ with elements $G_{ij}$ and $B_{ij}$, respectively, are the real part and the complex part of the complex nodal admittance matrix $\bm{Y}=\bm{G}+i\bm{B}\in\mathbb{C}^{N\times N}$. Note that the matrices $\bm{B}$ and $\bm{G}$ are not actually matrices of susceptances and conductances, respectively. Instead, their entries read as follows 
\begin{equation}
    \begin{aligned}
    B_{ij}   =\left\{\begin{array}{l l }
      -b_{ij} & \; \mbox{if $(i,j)\in E(G),~i\neq j$},  \\
      b_{i}^{\text{shunt}}+\sum_{(i,k)\in E(G)}b_{ik} & \; \mbox{if $i=j$},  \\
      0     & \; \mbox{otherwise},
  \end{array} \right.
\end{aligned}\nonumber
\end{equation}
where $b_i^{\text{shunt}}$ denotes the shunt susceptance of node $i$ and $b_{ij}$ is the susceptance of the circuit connecting node $i$ to node $j$. $\bm{G}$ has an analogous structure with elements
\begin{equation}
    \begin{aligned}
    G_{ij}   =\left\{\begin{array}{l l }
      -g_{ij} & \; \mbox{if $(i,j)\in E(G),~i\neq j$},  \\
      g_{i}^{\text{shunt}}+\sum_{(i,k)\in E(G)}g_{ik} & \; \mbox{if $i=j$},  \\
      0     & \; \mbox{otherwise},
  \end{array} \right.
\end{aligned}\nonumber
\end{equation}
where $g_{ij}$ are the conductances of the circuit between nodes $i$ and node $j$. The matrices $\bm{B}$ and $\bm{G}$ thus have the structure of a Laplacian matrix except for the diagonal entries which contain additional terms given by the shunt susceptances and conductances. The off-diagonal elements of the nodal admittance matrix thus read as 
\begin{equation}
    \begin{aligned}
    Y_{jk}=-y_{jk},~\forall j\neq k;\quad y_{jk}=g_{jk}+ib_{jk} = \frac{1}{r_{jk}+ix_{jk}},
\end{aligned}\nonumber
\end{equation}
with the circuit's reactance $x_{jk}$ and resistance $r_{jk}$. Note that line susceptances $b_\ell=\frac{-x}{r^2+x^2}$ are thus negative. The Equations~\eqref{eq:powerflow} reduce to the lossless power flow equations in the case where the real part of the nodal admittance matrix is negligible $\bm{G}\approx\bm{0}$, i.e. lines are purely inductive. 

We will focus on the so called \textit{DC approximation} of this full AC power flow equations. This approximation is based on three assumptions~\cite{Wood14}:
\begin{enumerate}
    \item Voltages vary little, i.e., $|V_i|\approx \text{const},~\forall i$ with respect to their base values,
    \item Angular differences are small, i.e., $\sin(\vartheta_i-\vartheta_j)\approx \vartheta_i-\vartheta_j,~\forall (i,j) \in E(G)$,
    \item Transmission lines are purely inductive, i.e., $B_{ij}\gg G_{ij},~\forall (i,j)\in E(G)$.
\end{enumerate}
Typically, these assumptions are fulfilled for high voltage transmission grids if the line loading is not too large~\cite{Purc05}. Using these approximation, Equation~\eqref{eq:powerflow} reduces to 
\begin{equation}
    \begin{aligned}
    P_i=\sum_{k=1}^N\underbrace{|V_i||V_k|B_{ik}}_{K_{ik}}(\vartheta_i-\vartheta_k),
\end{aligned}\nonumber
\end{equation}
thus revealing the analogy to Equation~\eqref{eq:dc_approx}. 

\subsubsection{Application to resistor networks}
Resistor networks are another example which may be described using linear flow networks. They have been studied for a long time leading to many fundamental results of graph theory~\cite{bollobas1998}. We will briefly introduce the theory of resistor networks and use the symbol $\hat{=}$ to refer to the corresponding quantity in the mathematical framework of linear flow networks as introduced in section~\ref{sec:MathematicsLinFlow}. For resistor networks, the flow along the graph's edges is a current flow $\vec{i}\in\mathbb{R}^L\hat{=}\vec{F}$ between nodes of different voltage $\vec{V}\in\mathbb{R}^N\hat{=}\vec{\vartheta}$. The line weights are given by the inverse resistances, i.e. the conductances, of the lines $\bm{G}\in\mathbb{R}^{L\times L}\hat{=}\bm{K}$ such that Equation~\eqref{eq:flows_vectorial} reads in this case
\begin{equation}
    \begin{aligned}
    \vec{i}=\bm{G}\bm{I}^\top\vec{V},
\end{aligned}\nonumber
\end{equation}
where $\bm{I}$ is again the node-edge incidence matrix. Along the same lines, Equation~\eqref{eq:Kirchhoff} translates to
\begin{equation}
    \begin{aligned}
    \vec{i}_{\textrm{in}} = \bm{I}\vec{i}.
\end{aligned}\nonumber
\end{equation}
Here, $\vec{i}_{\textrm{in}}\in \mathbb{R}^N\hat{=}\vec{P}$ is a vector of currents injected at the graphs' nodes and the Equation is again a manifestation of Kirchhoff's current law. We may thus apply the same theoretical framework to resistor networks.

\subsubsection{Applications to hydraulic networks}
The same formalism can also be shown to apply water transport networks that we refer to as hydraulic networks or pipe networks. Consider a hydraulic network consisting of pipes that connect to each other at junctions. Then we form the underlying graph by assigning a vertex to each of the junctions and put an edge between two vertices if they are connected via a pipe. The nodal quantity of interest in this case is the pressure $\vec{p}\in\mathbb{R}^N\hat{=}\vec{\vartheta}$. If we assume the pipes to be much longer than their radius $r\ll L$ and the flow across all pipes in the network to be laminar with a Newtonian, incompressible fluid flowing through it, we can approximate the fluid flow $\vec{Q}\in\mathbb{R}^L\hat{=}\vec{F}$ across a pipe $\ell=(i,j)$ by the \textit{Hagen-Poiseuille equation}
\begin{equation}
    \begin{aligned}
    Q_\ell=K_\ell\cdot (p_i-p_j).
\end{aligned}\nonumber
\end{equation}
Here, we collected different parameters describing the pipe and the fluid in the line parameter 
\begin{equation}
    \begin{aligned}
    K_\ell=\frac{\pi r_\ell^4}{8\mu L_\ell}
\end{aligned}\nonumber
\end{equation}
with the pipe radius $r_\ell$, the pipe length $L_\ell$ and the fluid's dynamic viscosity $\mu$. Conservation of mass then requires that inflows and outflows balance as in Equation~\eqref{eq:dc_approx}. Important applications of this framework are blood vessels in humans and animals~\cite{reichold_vascular_2009}, the vascular system of plants~\cite{Kati10} or hydraulic networks~\cite{wurbs2001}. For vascular networks, the system does not consist of pipes but rather of smaller vascular bundles such that the scaling of line parameter $K$ with the radius $r^4$ does not necessarily exactly hold~\cite{coomes_scaling_2008}. 

\subsubsection{Applications to limit cycle oscillators}
\label{sec:limit_cycle_oscillators}
The linear flow model may be regarded as a linearisation of the \textit{Kuramoto model} which naturally appears in many cases, in particular when approximating weakly coupled oscillator systems near a stable limit cycle~\cite{Manik2017}.

Consider a connected, simple graph $G=(E,V)$. The Kuramoto model describes a set of weakly coupled oscillators with phase angles $\vec{\vartheta}\in\mathbb{R}^N$ attached to the graph's vertices that are coupled via the graph's edges through coupling constants $K_{ij},(i,j) \in E(G)$,~see e.g.\cite{rodrigues_kuramoto_2016}. The oscillators' tendency to synchronise through the coupling is counteracted by each oscillator's natural frequency $\omega_j$ that is written compactly as a vector $\vec{\omega}=(\omega_1,...,\omega_N)^\top\in\mathbb{R}^N$. Then the dynamics of the phase angle $\vartheta_i$ attached to node $i,i\in\{1,...,N\}$, reads
\begin{equation}
    \begin{aligned}
    \dot{\vartheta}_i=\omega_i-\sum_k K_{ik} \sin(\vartheta_i-\vartheta_k).
\end{aligned}\nonumber
\end{equation}
As before, we fix an orientation of the graph's edges and summarise the coupling coefficients for all edges $(i,j)\in E(G)$ in the diagonal coupling matrix $\bm{K}\in\mathbb{R}^{L\times L}$, such that the vectorised dynamics reads
\begin{equation}
    \begin{aligned}
    \dot{\vec{\vartheta}}=\vec{\omega} -\bm{I}\bm{K}\sin(\bm{I}^\top\vec{\vartheta}).
    \label{eq:Kuramoto}
\end{aligned}
\end{equation}
Here, $\bm{I}$ is again the graph's node-edge incidence matrix~\eqref{eq:ne_incidence} and the sine function is understood to be taken element-wise, i.e 
\begin{equation}
    \begin{aligned}
    \sin(\bm{I}^\top\vec{\vartheta})=(\sin([\bm{I}^\top\vec{\vartheta}]_{1}),...,\sin([\bm{I}^\top\vec{\vartheta}]_{L}))^\top.
\end{aligned}\nonumber
\end{equation}
Fixed points of the dynamics are defined by a vanishing time derivative $\dot{\vec{\vartheta}}=\vec{0}$. Therefore, the equation characterising the phase angles at the fixed point $\vec{\vartheta}^*$ reads 
\begin{equation}
    \begin{aligned}
    \vec{\omega}=\bm{I}\bm{K}\sin(\bm{I}^\top\vec{\vartheta}^*).
\end{aligned}\nonumber
\end{equation}
If the angular differences on all edges are small, we may reduce this to the linear equation $\sin(\bm{I}^\top\vec{\vartheta})\approx\bm{I}^\top\vec{\vartheta}$, again retrieving an expression analogous to the discrete Poisson equation~\eqref{eq:vector_dc}. 

\subsubsection{The second-order Kuramoto model}
An extension of the Kuramoto model presented in Equation~\eqref{eq:Kuramoto} is given by the second-order Kuramoto model that is also used frequently in power systems analysis to describe synchronising generators~\cite{Rohden2012,nishikawa2015,manik_supply_2014}, where it is also referred to as Kuramoto model with inertia. The model contains an additional second-order time derivative of phase angles representing the generators' inertia and reads as
\begin{equation}
    \begin{aligned}
        \ddot{\vec{\vartheta}} =-\bm{\alpha}\dot{\vec{\vartheta}} +\vec{\omega} -\bm{I}\bm{K}\sin(\bm{I}^\top\vec{\vartheta}).
        \label{eq:kuramoto_second_order}
\end{aligned}
\end{equation}
Here, $\bm{\alpha}=\operatorname{diag}(\alpha_1,...,\alpha_N)\in\mathbb{R}^{N\times N}$ is a diagonal matrix incorporating the generators' inertia and friction coefficients~\cite{Rohden2012} and the other quantities are defined the same way as for the first order Kuramoto model~\eqref{eq:Kuramoto}. The vector of frequencies in this model corresponds to the power injections $\vec{\omega}\in\mathbb{R}^N\hat{=}\vec{P}$. Fixed points of the second order model with phase angles $\vec{\vartheta}^*$ are characterized by both, first and second order time derivative vanishing $\ddot{\vec{\vartheta}}=\dot{\vec{\vartheta}}=\vec{0}$ resulting in the same equation as for the first order model
\begin{equation}
    \begin{aligned}
    \vec{\omega}=\bm{I}\bm{K}\sin(\bm{I}^\top\vec{\vartheta}^*).
\end{aligned}\nonumber
\end{equation}
Again, this model reduces to the linear flow model if phase differences at the fixed point are small $\sin(\bm{I}^\top\vec{\vartheta}^*)\approx\bm{I}^\top\vec{\vartheta}^*$. 

\subsection{Description of link failures}
\label{sec:link_failures}
In this section, we will briefly review the analysis of link failures within the linear flow theory setting. We will first demonstrate how the effects of a link failure may be approached on the nodal level~\cite{strake2018}. Assume that a link $k=(r,s)$ with preoutage flow $\hat{F}_k$ fails, which does not disconnect the graph. This induces a change in the potentials 
\begin{equation}
    \begin{aligned}
    \vec{\vartheta}^\prime=\vec{\vartheta}+\Delta\vec{\vartheta}
\end{aligned}\nonumber
\end{equation}
by virtue of the discrete Poisson equation~\eqref{eq:vector_dc}. Here, we introduced the vector of potential changes $\Delta\vec{\vartheta}\in\mathbb{R}^N$ and a vector of potentials after the failure $\vec{\vartheta}^\prime\in\mathbb{R}^N$. The corresponding equation for the new grid reads as 
\begin{equation}
    \begin{aligned}
    \vec{P} = (\bm{L}+\Delta\bm{L})(\vec{\vartheta}+\Delta\vec{\vartheta}).
\end{aligned}\nonumber
\end{equation}
Here, $\Delta\bm{L}$ is the change in the Laplacian matrix due to the removal of link $k$ and takes the form $\Delta\bm{L}=K_k\bm{I}\vec{l}_k(\bm{I}\vec{l}_k)^\top$. If we subtract the discrete Poisson equation for the old grid before the failure of link $k$ from this equation, we arrive at the expression
\begin{equation}
    \begin{aligned}
    \Delta\vec{\vartheta} = -(\bm{L}+\Delta\bm{L})^\dagger\Delta\bm{L}\vec{\vartheta}.
\end{aligned}\nonumber
\end{equation}
Finally, we can use the Woodbury Matrix identity to rewrite the expression into the following form~\cite{strake2018}
\begin{equation}
    \begin{aligned}
    \bm{L} \Delta\vec{\vartheta} = q_{k} \vec \nu_{k},
    \label{eq:Poisson}
\end{aligned}
\end{equation}
where 
\begin{equation}
    \begin{aligned}
    q_{k}=(1-K_k(\bm{I}\cdot\vec{l}_{k})^\top \bm{L}^\dagger\bm{I}\cdot\vec{l}_k)^{-1}\hat{F}_k
\end{aligned}\nonumber
\end{equation}
is a source term. Similar expressions appear naturally when analysing resistor networks and have been studied, for example, in Refs.~\cite{van_mieghem_2017,Norman97}. After calculating the potential changes based on this equation, the flow changes on a link $\ell=(\ell_1,\ell_2)$ are given by the following equation 
\begin{equation}
    \begin{aligned}
    \Delta F_{\ell_1\rightarrow\ell_2}=K_\ell\cdot(\Delta \vartheta_{\ell_1}-\Delta \vartheta_{\ell_2}).
\end{aligned}\nonumber
\end{equation}

\section{Network isolators inhibit failure spreading completely}
\label{sec:digraph_isolator}

In this section we formally establish the existence of network isolators. To this end we first fix some notation.
\subsection{Fundamentals and notation}
We consider a linear flow network consisting of two parts, i.e.~its vertex set $V$ is written as $V = V_1 + V_2$. We now label the nodes in $V$ as follows without loss of generality
\begin{equation}
    \begin{aligned}
   & 1, \ldots, m_1:  & &\text{nodes in $V_1$ that are connected to $V_2$} \\
   & m_1+1,\ldots, n_1 : & &\text{nodes in $V_1$ that are not connected to $V_2$} \\
   & n_1+1, \ldots, n_1+m_2: & &\text{nodes in $V_2$ that are connected to $V_1$} \\
   & n_1+m_2+1, \ldots, n_1+n_2: & &\text{nodes in $V_2$ that are not connected to $V_1$}.
\end{aligned}\nonumber
\end{equation}
Then the weighted adjacency matrix of the network can be written as
\begin{equation}
    \begin{aligned}
   \bm{A} &= \begin{pmatrix}
             \bm{A}_1 & \bm{A}_{12} \\
             \bm {A}_{12}^\top & \bm{A}_2 
             \end{pmatrix} \\
   \bm{A}_{12} &=  \begin{pmatrix}
             \bm{a} & \bm{0} \\
             \bm{0} & \bm{0}
             \end{pmatrix}           
\end{aligned}\nonumber
\end{equation}
with $\bm{A}_1 \in \mathbb{R}^{n_1 \times n_1}$, $\bm{A}_2 \in \mathbb{R}^{n_2 \times n_2}$, $\bm{A}_{12} \in \mathbb{R}^{n_1 \times n_2}$ and $\bm{a} \in \mathbb{R}^{m_1 \times m_2}$. Furthermore, we define the degree matrices $\bm{D}_1$, $\bm{D}_2$ and $\bm{d}$ associated with the adjacency matrices $\bm{A}_1$, $\bm{A}_2$ and $\bm{a}$, that is
\begin{equation}
    \begin{aligned}
    d_{kl} = \left\{ \begin{array}{l l l}
         \sum_{p} a_{k p} & \text{for} & k=l \\
         0 & & k \neq l
    \end{array} \right. ,
\end{aligned}\nonumber
\end{equation}
and the Laplacian matrices $\bm{L}_1 = \bm{D}_1-\bm{A}_1$ of subnetwork 1, $\bm{L}_2 = \bm{D}_2-\bm{A}_2$ of subnetwork 2 and $\bm{L}$ of the whole system. 

\subsection{Main theorem on network isolators}
We can then formulate the main theorem on network isolators.

\begin{theorem}[Network isolators completely suppress failure spreading between modules]
\label{theo:weighted}
Consider a linear flow network consisting of two parts with vertex sets $V_1$ and $V_2$ and assume that a single link in the induced subgraph $G(V_1)$ fails, i.e.~a link $(r,s)$ with $r,s \in V_1$. If the adjacency matrix of the mutual connections has unit rank ${\rm rank}(\bm{A}_{12}) = 1$, then the flows on all links in the induced subgraph $G(V_2)$ are not affected by the failure, that is
\begin{equation}
    \begin{aligned}
   \Delta F_{\ell_1,\ell_2} \equiv 0 \quad \forall \ell_1,\ell_2 \in V_2.  
\end{aligned}\nonumber
\end{equation}
The subgraph corresponding to the mutual interactions is referred to as \textbf{network isolator}.
\end{theorem}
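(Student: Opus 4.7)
My plan is to exploit the block structure induced by the partition $V = V_1 \sqcup V_2$ and reduce the Poisson equation $\bm L \Delta \vec \vartheta = q_\ell \vec \nu_\ell$ to a one-dimensional problem that can be solved by inspection. Since both endpoints of the failing edge lie in $V_1$, the dipole source decomposes as $\vec \nu_\ell = (\vec \nu_\ell^{(1)},\vec 0)^\top$, and the Laplacian in block form reads
\[
\bm L = \begin{pmatrix} \bm L_1 + \bm D_{12}^{(1)} & -\bm A_{12} \\ -\bm A_{12}^\top & \bm L_2 + \bm D_{12}^{(2)} \end{pmatrix},
\]
where $\bm D_{12}^{(1)}$ and $\bm D_{12}^{(2)}$ are the diagonal matrices of row and column sums of $\bm A_{12}$, respectively. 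The lower block is therefore $(\bm L_2 + \bm D_{12}^{(2)}) \Delta \vec \vartheta^{(2)} = \bm A_{12}^\top \Delta \vec \vartheta^{(1)}$.

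Next I would invoke the rank-one hypothesis. Writing $\bm a = \vec u \vec v^\top$, a direct computation gives $\bm A_{12}^\top \vec x = (\vec u^\top \vec x_{[1:m_1]})\,\vec y$ for every $\vec x\in\mathbb{R}^{n_1}$, where $\vec y := (\vec v,\vec 0)^\top\in\mathbb{R}^{n_2}$. Hence the right-hand side of the lower block equation always lies in the one-dimensional subspace $\operatorname{span}(\vec y)$, regardless of what $\Delta \vec \vartheta^{(1)}$ turns out to be.

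The crucial identity is that this distinguished direction is the image of $\vec 1_{V_2}$ under $\bm L_2 + \bm D_{12}^{(2)}$: since $\bm L_2 \vec 1 = \vec 0$, one has $(\bm L_2 + \bm D_{12}^{(2)}) \vec 1_{V_2} = \bm D_{12}^{(2)} \vec 1 = s\,\vec y$ with $s = \sum_k u_k$, because the column sums of $\bm A_{12}$ are exactly $s v_j$ in the first $m_2$ entries and zero afterward. Assuming $\bm L_2 + \bm D_{12}^{(2)}$ is invertible, it follows that $(\bm L_2 + \bm D_{12}^{(2)})^{-1}\vec y = \vec 1_{V_2}/s$, so $\Delta \vec \vartheta^{(2)}$ is forced to be a scalar multiple of $\vec 1_{V_2}$. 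Consequently $\Delta F_{\ell_1\to \ell_2} = K_{\ell_1\ell_2}(\Delta\vartheta_{\ell_1}-\Delta\vartheta_{\ell_2}) = 0$ for every edge $(\ell_1,\ell_2)\in G(V_2)$, which is the desired conclusion.

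The missing scalar is pinned down consistently by feeding the ansatz back into the upper block equation, which becomes a well-posed linear system for $\Delta \vec \vartheta^{(1)}$ (unique up to the overall kernel direction $\vec 1$ of $\bm L$). The main technical hurdle I anticipate is the degenerate situation in which a connected component of $G(V_2)$ is entirely disconnected from $V_1$: there $\bm L_2 + \bm D_{12}^{(2)}$ has a nontrivial kernel. Such a component, however, receives no forcing at all, so its contribution to $\Delta \vec \vartheta^{(2)}$ is automatically constant, and the argument carries through componentwise with the Moore--Penrose pseudoinverse. Finally, the unweighted bipartite criterion is the special case $\vec u = \vec 1_{m_1}$ and $\vec v = \vec 1_{m_2}$, which manifestly makes $\bm a$ an all-ones matrix of rank one.
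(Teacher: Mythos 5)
Your proposal is correct and follows essentially the same route as the paper's proof: decompose the dipole equation $\bm{L}\,\Delta\vec\vartheta = q_\ell\vec\nu_\ell$ into blocks, observe that the rank-one structure of $\bm{A}_{12}$ forces the right-hand side of the $V_2$ block into a fixed one-dimensional subspace, note that the constant vector $\vec 1_{V_2}$ is mapped into that same subspace by the grounded Laplacian $\bm{L}_2+\bm{D}_{12}^{(2)}$, and invoke its positive definiteness for uniqueness. The only (cosmetic) difference is that the paper normalizes by the degree matrix $\bm d^{-1}$ and verifies the constant ansatz componentwise, whereas you factor $\bm a=\vec u\vec v^{\top}$ and compute the image of $\vec 1_{V_2}$ directly, which incidentally handles vanishing entries of $\bm a$ a bit more cleanly.
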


\begin{proof}
Assume that the adjacency matrix of the mutual connections has unit rank ${\rm rank}(\bm{A}_{12}) = {\rm rank}(\bm{a})  = 1$. We first proof that for any vector $\vec{y}\in \mathbb{R}^{n_1}$ the following statement holds
\begin{equation}
    \begin{aligned}
    \vec x = 
       \begin{pmatrix} 
           \bm{d}^{-1} \bm{a} & \bm{0} \\ \bm{0} & \bm{0} \end{pmatrix}
       \vec y
       = c \begin{pmatrix}
         1 \\ \vdots \\ 1 \\ 0 \\ \vdots \\ 0
         \end{pmatrix},
         \label{eq:rka-implies-c1}
\end{aligned}
\end{equation}
where $c\in\mathbb{R}$ is some real number. This result can be obtained by writing $\vec x\in\mathbb{R}^{n_2}$ in components. For all $j \in\{1,\ldots,m_2\}$ we have
\begin{equation}
    \begin{aligned}
    x_j = \frac{\sum_k a_{jk} y_k}{\sum_k a_{jk}}.
\end{aligned}\nonumber
\end{equation}
Since $\bm{a}$ has unit rank all its rows are linearly dependent such that we can write $a_{jk}/a_{1k} = a_{j1}/a_{11}$ for all $k \in \{1,\ldots,n_1\}$, such that $a_{jk}= a_{1k}a_{j1}/a_{11}$. Hence, 
\begin{equation}
    \begin{aligned}
    x_j &= \frac{a_{j1}/a_{11} \times \sum_k a_{1 k} y_k}{a_{j1}/a_{11} \times \sum_k a_{1k}} \\
    &= \frac{\sum_k a_{1 k} y_k}{\sum_k a_{1k}} 
    = x_1 =: c,
\end{aligned}\nonumber
\end{equation}
and all elements of the vector are equal. The remaining $n_2-m_2$ elements of the vector vanish, $x_j=0,~\forall j \in \{m_2+1,...,n_2\}$, because the corresponding adjacency matrix $\bm{A}_{12}$ has only zero entries at the respective positions.

We now compute the impact of a failure of link $k$ in $G(V_1)$ via the discrete Poisson equation~\eqref{eq:Poisson}
\begin{equation}
    \begin{aligned}
\bm{L}\Delta \vec{\vartheta}=q_k\vec\nu_k.
\end{aligned}\nonumber
\end{equation}
We decompose this equation as well as the vectors $\Delta \vec{\vartheta}$ and $\vec \nu$ into two parts corresponding to the two parts of the network
\begin{equation}
    \begin{aligned}
   \Delta \vec{\vartheta} = \begin{pmatrix} 
           \Delta \vec{\vartheta}_1 \\ \Delta \vec{\vartheta}_2 
        \end{pmatrix}, 
    \qquad \qquad
    \vec \nu = \begin{pmatrix} 
    \vec \nu_1 \\ \vec 0 
        \end{pmatrix} \, ,
\end{aligned}\nonumber
\end{equation}
where $\Delta \vec{\vartheta}_1,\vec{\nu}_1\in\mathbb{R}^{n_1}$ and $\Delta \vec{\vartheta}_2,\vec{\nu}_2\in\mathbb{R}^{n_2}$. Then the lower part of Equation~\eqref{eq:Poisson} corresponding to the vertices $n_1+1,\ldots, n_1+n_2$ reads
\begin{equation}
    \begin{aligned}
    \left[ \bm{L}_2  + 
    \begin{pmatrix} 
      \bm{d} & \bm{0} \\ \bm{0} & \bm{0}
    \end{pmatrix} \right] \Delta \vec{\vartheta}_2 = 
    \begin{pmatrix} 
      \bm{a} & \bm{0} \\ \bm{0} & \bm{0}
    \end{pmatrix} \Delta \vec{\vartheta}_1
    \label{eq:L2-d-a-psi1}
\end{aligned}
\end{equation}
using the notation established above. Using the prior result (\ref{eq:rka-implies-c1}) and multiplying by the matrix 
\begin{equation}
    \begin{aligned}
    \begin{pmatrix}
    \bm{d}^{-1} & \bm{0}\nonumber\\
    \bm{0} & \bm{1}
    \end{pmatrix}
    \end{aligned}
\end{equation}
this equation can be rewritten as
\begin{equation}
    \begin{aligned}
    \left[ 
    \begin{pmatrix} 
      \bm{d}^{-1} & \bm{0} \\ \bm{0} & \bm{1}
    \end{pmatrix} \bm{L}_2  + 
     \begin{pmatrix} 
           \bm{1} & \bm{0} \\ \bm{0} & \bm{0}
    \end{pmatrix} \right] \Delta \vec{\vartheta}_2 
    = \begin{pmatrix} 
    \bm{d}^{-1} \bm{a} & \bm{0} \\ \bm{0} & \bm{0} \end{pmatrix}
       \Delta \vec{\vartheta}_1
       = c \begin{pmatrix}
         1 \\ \vdots \\ 1 \\ 0 \\ \vdots \\ 0
         \end{pmatrix}
\end{aligned}\nonumber
\end{equation}
Now one can easily check via a direct calculation that 
\begin{equation}
    \begin{aligned}
    \Delta \vec{\vartheta}_2 = c \begin{pmatrix}
         1 \\ \vdots \\ 1 
         \end{pmatrix}
\end{aligned}\nonumber
\end{equation}
is a solution to this equation. Furthermore, this solution is unique as the linear system of equation has full rank. This is most easily seen for Equation (\ref{eq:L2-d-a-psi1}), as the matrix on the left hand side  is normal and positive definite.

We have thus shown that the nodal potentials in $V_2$ are shifted by the same constant $c$ when a link in $G(V_1)$ fails. Hence the flow changes are given by
\begin{equation}
    \begin{aligned}
   \Delta F_{\ell_1\rightarrow\ell_2} = K_{\ell}
   (\Delta\vartheta_{\ell_1} - \Delta \vartheta_{\ell_2} )
   = 0 \quad \forall \ell_1,\ell_2 \in V_2.  
\end{aligned}\nonumber
\end{equation}
\end{proof}

\begin{corollary}[Complete bipartite graphs are network isolators]
Consider a linear flow network consisting of two modules with vertex sets $V_1$ and $V_2$ and assume that a single link in the induced subgraph $G(V_1)$ fails, i.e.~a link $(r,s)$ with $r,s \in V_1$. If the subgraph $G^\prime$ of mutual connections between the two modules is a complete bipartite graph with uniform edge weights $K=K_\ell = K_m,~\forall \ell,m\in E(G^\prime)$, then the subgraph is a network isolator. If the whole graph is unweighted, $G^\prime$ always has uniform edge weights, thus a complete bipartite graph of mutual connections always is a network isolator for any unweighted network.
\end{corollary}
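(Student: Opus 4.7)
The plan is to reduce this corollary directly to the preceding theorem by showing that the adjacency matrix $\bm{a}$ of the mutual connections has unit rank under the stated hypothesis. First I would translate the graph-theoretic assumptions into matrix form: because $G'$ is a complete bipartite graph between the sets of boundary nodes, every vertex labelled $1,\ldots,m_1$ is connected to every vertex labelled $n_1+1,\ldots,n_1+m_2$. Combined with the uniform edge weight $K_\ell = K$ for all $\ell \in E(G')$, this forces the block $\bm{a}\in\mathbb{R}^{m_1\times m_2}$ to have the constant value $K$ in every entry, while the remaining blocks of $\bm{A}_{12}$ vanish by the labelling convention established in the previous subsection.

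Next I would factorise this constant matrix as $\bm{a} = K\,\bm{1}_{m_1}\bm{1}_{m_2}^{\top}$, where $\bm{1}_{k}$ denotes the all-ones vector in $\mathbb{R}^{k}$. Being a nonzero outer product of two vectors, $\bm{a}$ has rank exactly one, so that $\operatorname{rank}(\bm{A}_{12}) = \operatorname{rank}(\bm{a}) = 1$. The preceding theorem then applies verbatim and yields the desired isolation property $\Delta F_{\ell_1 \to \ell_2} \equiv 0$ for all $\ell_1,\ell_2 \in V_2$.

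For the second assertion, concerning unweighted networks, I would simply observe that the uniformity hypothesis becomes automatic: in an unweighted graph every present edge carries weight one, so the edge weights of the mutual-connection subgraph agree trivially and the argument above goes through with $K=1$.

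There is no serious obstacle here beyond recognising that \emph{``complete bipartite with uniform weights''} is merely a graph-theoretic restatement of the linear-algebraic condition $\operatorname{rank}(\bm{a})=1$, and checking that the labelling convention places this rank-one block precisely in the corner of $\bm{A}_{12}$ required by the main theorem; the substantive work has already been done there.
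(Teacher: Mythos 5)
Your proposal is correct and follows essentially the same route as the paper: both reduce the corollary to the main theorem by observing that a complete bipartite connection with uniform weight $K$ makes the mutual-connection block a constant matrix $K\,\bm{1}_{m_1}\bm{1}_{m_2}^{\top}$ of rank one. Your explicit outer-product factorisation is a slightly cleaner way of stating what the paper asserts by inspection, but the argument is the same.
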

\begin{proof}
If the subgraph $G^\prime$ is complete and bipartite (ignoring all connections within both induced subgraphs $G(V_1)$ and $G(V_2)$), its adjacency matrix takes the form
\begin{equation}
    \begin{aligned}
    \bm{A}^\prime = K\cdot\begin{pmatrix}
             \bm{0} &\bm{1}_{m_1\times m_2} \\
             \bm{1}_{m_1\times m_2}^\top & \bm{0} 
             \end{pmatrix} .
\end{aligned}\nonumber
\end{equation}
We can immediately see that this matrix has unit rank, such that by theorem~\ref{theo:weighted}, $G^\prime$ is a network isolator.
\end{proof}

\subsection{Network isolators in non-linear systems}
We will now demonstrate how to extend the concepts of network isolators from linear systems to a certain class of non-linear networked systems with diffusive coupling. Let 
$$\vec{f}(\bm{L}\vec{x}) = (f_1([\bm{L}\vec{x}]_{1}),...,f_N([\bm{L}\vec{x}]_N))^\top:\vec{x}\in\mathbb{R}^N\rightarrow \vec{f}(\bm{L}\vec{x})\in\mathbb{R}^N$$
be a continuous function on the real numbers that depends on the product of Laplacian matrix $\bm{L}$ and vector $\vec{x}$.
Here, $[\bm{L}\vec{x}]_j$ denotes the $j$-th row of the standard matrix-vector product $\bm{L}\vec{x}$. We assume that the underlying network topology is again separated into two subgraphs $G(V_1)$ and $G(V_2)$, see the beginning of this section. We further assume that $$f_j(0)=0,~\forall j\in\{1,...N\},$$ 
i.e. each of the functions vanishes at the origin. 
Note that the functions $f_j([\bm{L}\vec{x}]_j)$ can be different and non-linear, as long as they vanish at the origin. Consider a dynamical system of the form 
\begin{equation}
    \dot{\vec{x}} = \vec{f}(\bm{L}\vec{x}) %
    \label{eq:nonlinear_dynamics}
\end{equation}
that admits a fixed point solution $\vec{x}^*$ with vanishing time derivative $\dot{\vec{x}}=\vec{0}$ that fulfills
\begin{equation}
    \vec{0} =\vec{f}(\bm{L}\vec{x}^*).
    \label{eq:nonlinear_fixedpoint}
\end{equation}
Now add a perturbation vector
\begin{equation}
    \begin{aligned}
    \Delta\vec{P}= \begin{pmatrix} 
           \Delta \vec{P}_1 \\ \vec{0}
        \end{pmatrix} 
    \end{aligned}
    \label{eq:nonlinear_perturbation}
\end{equation}
to the system that has non-zero entries only at the nodes of the first induced subgraph $G(V_1)$ and assume that the dynamical system~\ref{eq:nonlinear_dynamics} relaxes to a new fixed point $\vec{x}^\prime$ with
\begin{equation}
    \Delta\vec{P} =\vec{f}(\bm{L}\vec{x}^\prime).
    \label{eq:non_linear_new_fixedpoint}
\end{equation}
Then the following corollary holds
\begin{corollary}[Isolation in non-linear systems]
Consider a non-linear dynamical networked system of the form~(\ref{eq:nonlinear_dynamics}) that consists of two modules with vertex sets $V_1$ and $V_2$ which are connected by a network isolator as of Theorem~\ref{theo:weighted}. Assume that the system admits a fixed point solution as given in Eq.~\ref{eq:nonlinear_fixedpoint}. Assume that a small perturbation as in Eq.~\ref{eq:nonlinear_perturbation} is applied to the nodes in the first induced subgraph $G(V_1)$  and that the system relaxes to a new fixed point as in Eq.~\ref{eq:non_linear_new_fixedpoint}. Then the new fixed point has the following form
\begin{equation*}
    \vec{x}^\prime =
    \begin{pmatrix} 
           \vec{x}_1^\prime \\ c\vec{1}_2
    \end{pmatrix} ,
\end{equation*}
where $c\in\mathbb{R}$ is a constant. 
\end{corollary}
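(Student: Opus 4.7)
The plan is to reduce the non-linear fixed-point equation to the linear setting of Theorem~\ref{theo:weighted} by exploiting the hypothesis $f_j(0)=0$. The starting observation is that the identity $\vec{f}(\bm{L}\vec{x}') = \Delta\vec{P}$ splits into two kinds of components: for $j \in V_1$ the right-hand side is a prescribed perturbation, while for every $j \in V_2$ the right-hand side vanishes, so $f_j([\bm{L}\vec{x}']_j)=0$. This asymmetry is what I intend to pair with the rank-one structure of $\bm{A}_{12}$.

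First, I would use $f_j(0)=0$ to argue that $[\bm{L}\vec{x}']_j = 0$ for every $j \in V_2$. Strictly, this step requires a local injectivity property of $f_j$ near zero (for instance $f_j'(0) \neq 0$, in the spirit of the linearisation approach taken elsewhere in the paper), which I would make explicit: either each $f_j$ is assumed strictly monotonic in a neighbourhood of the origin, or attention is restricted to the fixed point lying in a neighbourhood of the unperturbed one that the relaxation delivers. With this understood, the non-linear fixed-point equation restricted to $V_2$ has been reduced to the purely linear statement $[\bm{L}\vec{x}']_j = 0$ for all $j \in V_2$.

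Second, I would observe that this is precisely the equation analysed in the proof of Theorem~\ref{theo:weighted}: the $V_2$-block of $\bm{L}\vec{x}'$ vanishes, equivalently $\bm{L}\vec{x}'$ is supported on $V_1$. The same block-matrix manipulation used to arrive at~\eqref{eq:L2-d-a-psi1}, combined with the rank-one identity~\eqref{eq:rka-implies-c1} that follows from $\operatorname{rank}(\bm{A}_{12})=1$, then shows that $\vec{x}_2' = c\vec{1}_2$ is a solution, where $c$ is the scalar $(\sum_k u_k (\vec{x}_1')_k)/(\sum_k u_k)$ determined by any rank-one factorisation $\bm{a} = \vec{u}\vec{v}^\top$ of the cross-adjacency block. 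Uniqueness is then inherited from Theorem~\ref{theo:weighted}, since the operator $\bm{L}_2 + \operatorname{diag}(\bm{d},\bm{0})$ appearing in~\eqref{eq:L2-d-a-psi1} is symmetric positive definite whenever $G(V_2)$ is connected with at least one edge crossing the isolator.

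The main obstacle is the unstated hypothesis that $0$ is the unique (or at least locally unique) root of each $f_j$. Without it, $f_j([\bm{L}\vec{x}']_j)=0$ does not force $[\bm{L}\vec{x}']_j=0$, and the reduction to the linear argument collapses; one would only obtain the weaker statement that there exists a fixed point of the stated form, not that the relaxed one must have it. I expect the cleanest resolution is to add $f_j'(0)\neq 0$, or strict monotonicity on a neighbourhood of the origin, as an explicit assumption, which is natural given the paper's repeated appeal to linearisation around fixed points. Once that hypothesis is in place, the remainder is a verbatim recycling of the block-matrix computation already carried out in the proof of Theorem~\ref{theo:weighted}.
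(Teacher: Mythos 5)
Your proposal rests on exactly the same computation as the paper's proof: both reduce the $V_2$-block of the fixed-point condition to Eq.~(\ref{eq:L2-d-a-psi1}) and invoke the rank-one identity~(\ref{eq:rka-implies-c1}) from Theorem~\ref{theo:weighted}, so the core mechanics are identical. The difference is one of direction, and it is worth spelling out. The paper's proof is a one-line \emph{verification}: it checks that a vector of the form $\vec{x}' = (\vec{x}_1'^{\,\top}, c\vec{1}_2^{\,\top})^\top$ solves the fixed-point equations, since then $[\bm{L}\vec{x}']_j = 0$ for all $j\in V_2$ and $f_j(0)=0$ annihilates those components; this establishes only the \emph{existence} of a fixed point of the stated form. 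You instead attempt \emph{necessity} --- that the fixed point to which the system relaxes must have this form --- and you correctly observe that $f_j([\bm{L}\vec{x}']_j)=0$ forces $[\bm{L}\vec{x}']_j=0$ only if $0$ is a locally isolated root of $f_j$, which the paper never assumes. Your diagnosis is exactly right: without an added hypothesis such as $f_j'(0)\neq 0$ or strict monotonicity near the origin, one obtains only a fixed point of the stated form, not that the relaxed fixed point has it, and the corollary as literally stated (``the new fixed point has the following form'') outruns what the paper's own argument delivers. With that hypothesis in place your argument closes: the reduction to $[\bm{L}\vec{x}']_{V_2}=\vec 0$, the constancy of $\vec{x}_2'$ from the rank-one structure, and the uniqueness of $\vec{x}_2'$ given $\vec{x}_1'$ via positive definiteness of $\bm{L}_2 + \operatorname{diag}(\bm{d},\bm{0})$ (valid when $G(V_2)$ is connected and at least one isolator edge is present) are all sound and mirror the linear theorem step for step.
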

The second module is thus isolated against perturbations in the first module and vice versa in the sense that a perturbation in one module results in a constant shift in the other module.
\begin{proof}
The proof is analogous to the proof of theorem~\ref{theo:weighted}. Applying the function $\vec{f}$ to Equation~\ref{eq:L2-d-a-psi1} describing the fixed point in the non-perturbed subgraph $G(V_2)$, we see that the system is still solved by 
\begin{equation*}
    \vec{x}^\prime =
    \begin{pmatrix} 
          \Delta \vec{x}_1^\prime \\ c\vec{1}_2
    \end{pmatrix}.
\end{equation*}
\end{proof}

\subsection{Approximate isolation for weakly non-linear systems}
Even if not rigorously valid, we find that strong network isolation persists for an even larger class of non-linear systems that we will discuss in this section. Note that our analysis here closely follows a linear response theory analysis of Kuramoto oscillators that can be found in Ref.~\cite{Manik2017}. \\
Consider a networked non-linear dynamical system of the form
\begin{equation}
    \dot{\vec{x}}_i = \vec{f}(\vec{x}_i)_i+\sum_{k=1}^NA_{ik}g(x_i-x_k).\label{eq:non-linear_system}
\end{equation}
Here, $\vec{x}\in\mathbb{R}^N$ is a vector of nodal dynamical variables, $\vec{f}$ is a differentiable function of self-interactions of these variables and $\vec{g}(\vec{x})$ is a differentiable function that depends only on the differences of nodal variables at neighbouring nodes. The strength of interactions is encoded in the graph's adjacency matrix $\bm{A}$. Assume that the system relaxes to a fixed point with $\dot{\vec{x}}_i=0$ where $\vec{x}(t)=\vec{x}^*$. If we perturb the network locally at a node or an edge, we can compute the change in this fixed point using linear response theory~\cite{Manik2017}: to leading order, we obtain a linear system as above. \\
Assume that we perturb a single edge $(n,m)$ by modifying its edge weight by a small number $\Delta A_{ij}$ such that
\begin{align*}
    A_{ij}\rightarrow A_{ij}^\prime &= A_{ij}+\Delta A_{ij} \\
    \Delta A_{ij} &= \begin{cases} 0\qquad \text{if } (i,j)\neq (n,m)\\
    \Delta A\quad \text{if } (i,j)=(n,m)
    \end{cases}.
\end{align*}
Assume that this modification causes a change of the fixed point by
\begin{equation*}
    x_j^*\rightarrow x_j^\prime=x_j^*+\Delta x_j,\quad \forall j\in\{1,...,N\}.
\end{equation*}
We can expand the dynamics to leading order in terms of the new fixed point
\begin{equation*}
    \frac{\partial f(x^*_j)}{\partial x_j}\Delta x_j + \sum_{k=1}^NA_{jk}\frac{\partial g (x_j^*-x_k^*)}{\partial x_j}(\Delta x_j -\Delta x_k) + s_j = 0.
\end{equation*}
Here, $s_j$ is a source term that vanishes if node $j$ is not part of the edge $(n,m)$, $j\neq n,m$. The sum in this expression may be compactly written in terms of an effective Laplacian matrix $\tilde{\bm{L}}$
\begin{equation*}
    \sum_{k=1}^NA_{jk}\frac{\partial g (x_j^*-x_k^*)}{\partial x_j}(\Delta x_j -\Delta x_k) = [\tilde{\bm{L}}\Delta \vec{x}]_j,
\end{equation*}
where the Laplacian matrix has the off-diagonal entries 
$$\tilde{\bm{L}}_{jk} = -A_{jk}\frac{\partial g (x_j^*-x_k^*)}{\partial x_j}.$$
Thus, if the underlying graph contains a network isolator, we can apply Theorem~\ref{theo:weighted} to the system and see immediately that each component is (approximately) isolated against small perturbations in the other one. In particular, this description applies to Kuramoto oscillators (see section~\ref{sec:limit_cycle_oscillators}) perturbed at a few nodes or edges and powergrids described by AC load flow equations~\ref{eq:powerflow} subject to a link failure. We can thus get approximate isolation in both models as shown in Figure~4(e to h) for the AC load flow model.

\section{Linear controllability of complex networks}
\label{sec:linear_control}
We now turn to a different theoretical concept in complex networks research: the controllability of a network. In this section, we briefly analyse the influence of network isolators on the controllabilty of complex systems with a linear dynamics. In general, we find that introducing a network isolator to a complex network has no generic influence on its controllability.\\
Consider a linear dynamical system on a network with $N$ nodes with a state vector $\vec{x}\in\mathbb{R}^N$ whose dynamics is given by~\cite{yuan_exact_2013} 
\begin{align}
    \dot{\vec{x}} = \bm{A}\vec{x}+\bm{B}\vec{u}.
\end{align}
Here, $\bm{A}\in\mathbb{R}^{N\times N}$ denotes the graph's adjacency matrix, $\vec{u}\in\mathbb{R}^m$ is a (potentially time-varying) input vector that is supposed to achieve control of the network and $\bm{B}\in\mathbb{R}^{N\times m}$ is the control matrix. Then one definition of controllability is the following: Can we find a set of $m$ driver nodes identified by the controllability matrix $\bm{B}$ such that the system may be driven from any initial state $\vec{x}_0$ to any final state $\vec{x}_f$ in finite time? If yes, the system is said to be \textit{controllable} and a measure of its controllability is given by the minimum number of driving nodes $N_d\leq N$ necessary to achieve full controllability~\cite{yuan_exact_2013,gao_target_2014,liu_controllability_2011}. \\
We identify this set of driver nodes necessary for exact controllability for a small sample network using a method due to Yuan et al.~\cite{yuan_exact_2013} who demonstrated that the minimum number of driver nodes $N_d$ can be found by determining the multiplicity of the eigenvalues of the graph's adjacency matrix $\mathbf{A}$~\cite{yuan_exact_2013}. Assume that the underlying network is undirected such that its adjacency matrix is symmetric as for the networks studied in this manuscript. In this case, we can calculate the algebraic multiplicity $\delta(\lambda_i)$ for all eigenvalues $\lambda_i$ of this matrix to calculate the minimum number of driver nodes, $N_D$, necessary to control the network (cf. Eq.4, Ref.~\cite{yuan_exact_2013})
\begin{align}
    N_D = \operatorname{max}_i\left[\delta(\lambda_i)\right]\label{eq:controllability_yuan}.
\end{align}
This approach has the advantage that the driver nodes necessary to control the network, i.e. the controllability of a network, may immediately be identified, which is more complicated when using the classical Kalman rank condition~\cite{yuan_exact_2013}.\\
In Figure~\ref{fig:isolator_controllability}, we illustrate a potential application of this formalism to network isolators. The adjacency matrix of the graph reported in panel A has the eigenvalue $\lambda_M=-1$ with multiplicity $\delta(\lambda^M) = 2$, while all other eigenvalues have multiplicity one. An eigenvalue $\lambda_M = -1$ in the adjacency matrix can easily be constructed by connecting two nodes to the other nodes in a network in exactly the same way~\cite{van_mieghem_2017}. Thus, by the criterion~\ref{eq:controllability_yuan}, only two nodes are required to control the network. These nodes have been determined using the method described in Ref.~\cite{yuan_exact_2013} and are highlighted in orange. After introducing the isolator into the system (panel D), the maximum multiplicity of any eigenvalue of the graph's adjacency matrix is one, i.e.  $\delta(\lambda_i)=1,~\forall i$, which implies that the graph can be controlled by a single node (colored red). Therefore, in this case, the controllability of the network is increased after constructing the isolator. We emphasize that the network isolator prevents only flow changes, but not flows from passing as demonstrated in panels B-C and E-F. \\
For the remaining network isolators constructed in throughout this manuscript, we did not find any influence of the introduction of network isolators on the controllability of the underlying network and thus conclude that isolators do not generically influence network controllability.

\section{Computational methods}

\subsection{Creating graphs with strong or weak inter-module connectivity}
\label{sec:graph_model}

We introduce a model to create ensembles of graphs consisting of two subgraphs with weak or strong interconnectvity, see Figures~1 and 2. We start with two disconnected Erd\H{o}s-R\'{e}nyi random graphs $G_1(N_1,p_1)$ and $G_2(N_2,p_2)$, where $N$ denotes the number of nodes in the grid and $p$ the probability that two randomly chosen nodes are connected by an edge \cite{Erdos1960}. Then we randomly choose $n_1=[c\cdot N_1]$ nodes $v=\{v_1,. . .,v_{n_1}\}$ in $G_1$ and $n_2=[c\cdot N_2]$ nodes $w=\{w_1,. . .,w_{n_2}\}$ in $G_2$. Here, $c\in [0,1]\subset\mathbb{R}$ is a constant representing the share of nodes connecting to the other subgraph and $[\cdot]$ denotes the nearest integer. Out of all possible edges $e=\{(v_1,w_1),.. .,(v_{n_1},w_1),.. . ,(v_{n_1},w_{n_2})\}$ between the two sets of nodes $v$ and $w$, we randomly add a share of $\mu\in[0,1]$. The parameter $\mu$ controls the connectivity of the two subgraphs $G_1$ and $G_2$: They remain disconnected for $\mu=0$ and they are connected via a complete bipartite graph for for $\mu=1$. For $c=1$ and $\mu=p_1=p_2$ we recover a single Erd\H{o}s-R\'{e}nyi random graph with $N=N_1+N_2$ nodes. Note that this procedure is in principal not limited to ER random graphs. We apply it to study other types of graphs as shown in Supplemental Figure~\ref{fig:ratio_links_and_d}.

\subsection{Perturbing network isolators}

The robustness of network isolators to structural perturbations is analysed as follows. Let $G=(E,V)$ be a graph whose nodes are split into two subsets $V_1$ and $V_2$. Furthermore, let $\mathbf{A}_{12}$ be the part of the graph's weighted adjacency matrix that encodes the mutual connections between the two parts as described in theorem~\ref{theo:weighted}. Without loss of generality we can order the nodes of the network in such a way that the matrix has the structure
\begin{equation}
    \mathbf{A}_{12}=
    \begin{pmatrix}
    \vec{a}_{1} & \cdots & \vec{a}_{m} & \vec{0} & \cdots & \vec 0\\
    \vec{0} & \cdots & \vec{0} & \vec{0} & \cdots & \vec 0
    \end{pmatrix}.
\end{equation}
According to theorem~\ref{theo:weighted}, a perfect network isolator is found if $\operatorname{rank}(\mathbf{A}_{12})=1$, i.e.~if all vectors $\vec a_1, \ldots, \vec a_m$ are linearly dependent. 

To investigate the robustness of network isolators, we start from a unit rank matrix $\operatorname{rank}(\mathbf{A}_{12})=1$ and perturb it iteratively. In each step we choose one of the vectors $\vec a_i, i=1,\ldots,m$ at random and perturb it according to $\vec{a}_i^\prime = \vec{a}_i+ \vec{e} \lVert\vec{a}_i\rVert$. The elements of the perturbation vector $\vec{e}$ are chosen uniformly at random from the interval $[-\beta,\beta]$, where $\beta$ is a small parameter, here $\beta=0.05$.

The deviation of the perturbed matrix $\mathbf{A}_{12}$ from a unit rank matrix is quantified using its coherence statistics~\cite{tropp_computational_2010}, 
\begin{equation}
    \xi(\mathbf{A}_{12}) = 1-\min_{i,j}\frac{\langle \vec{a}_i,\vec{a}_j\rangle}{\lVert\vec{a}_i\rVert\lVert\vec{a}_j\rVert},
\end{equation}
where $\langle\cdot,\cdot\rangle$ denotes the standard scalar product on $\mathbb{R}^n$ and $\lVert \cdot \rVert$ denotes the $\ell^2$-norm. For a matrix $\mathbf{A}_{12}$ of unit rank we have $\xi(\mathbf{A}_{12})=0$ as all vectors are linearly dependent. For vectors deviating from linear dependence, the measure increases until it reaches its maximum value if two vectors are linearly independent with $\xi(\mathbf{A}_{12})=1$. 

To create Figure~2(e), we repeated this process $1000$ times starting from the perfect isolator shown in panel c. Edge weights were randomly chosen from a normal distribution $\mathcal{N}(10,1)$ with mean $\mu=10$ and variance $\sigma^2=1$ except for the isolator, where we choose four groups of four edges having the same weight such that initially $\operatorname{rank}(\mathbf{A}_{12})=1$. For each perturbed network, we evaluate $\xi(\mathbf{A}_{12})$ and the ratio of flow changes $R$ according to Eq.~(3) averaged over all possible trigger links $\ell$ and distances $d$. For a perfect isolator, this ratio vanishes due to a vanishing numerator.

\subsection{Power grid data and cascade model}
\label{sec:netdata}

Power grid data has been extracted from the open European energy system model PyPSA-Eur, which is fully available online \cite{horsch_2018}. The model includes the topology as well as the susceptance $b_\ell$ and the line rating $F_{i\rightarrow j}^{\rm max}$  for each high voltage transmission line in Europe. We consider the Scandinavian synchronous grid spanning Norway, Sweden, Finland and parts of Denmark. This grid is coupled to other synchronous grids (central European grid, British grid and Baltic grid) only via high voltage DC transmission lines. Power flow on these lines are actively controlled and can thus be considered constant, thus leading to constant real power injections at the coupling nodes. The Scandinavian grid has 269 nodes and 370 edges, counting multiple-circuit lines only once. 

Cascading failures are simulated for fixed power injections $P_i$ for each node corresponding to an economic dispatch for the entire PyPSA-Eur model that includes a security margin given by the constraint $|F_{i\rightarrow j}| \leq 0.8\cdot F_{i\rightarrow j}^{\rm max}$. The cascade is triggered by the failure of a single line $(r,s)$ which is effectively removed from the grid. The simulation then proceeds step-wise; In each step, we first calculate the nodal phase angles $\vartheta_i$ and real power flows $F_{i\rightarrow j}$ for all nodes and lines, respectively, by solving the continuity equation  $P_i = \sum_j F_{i\rightarrow j}$ with $F_{i\rightarrow j} = K_{ij} (\vartheta_i-\vartheta_j)$. Then we check for overloads: Any line $(i,j)$ with $|F_{i\rightarrow j}| > F_{i\rightarrow j}^{\rm max}$ undergoes an emergency shutdown and is removed from the grid. The simulations are stopped when no further overload occurs or when the grid is disconnected.

Note that this mechanism for cascading failures is different from the cascading failure mechanism typically analysed in node capacity load models (see e.g. Refs.~\cite{motter_ml_2002,crucitti_model_2004}). The redistribution of nodal loads or flows after failures in such models is typically based on neighborhood of nodes, on shortest path betweenness measures or on other `intelligent' redistribution schemes whereas the redistribution of flows after failures in linear flow networks or power grids studied using AC load flow analysis are given by the physical laws governing electrical networks. Furthermore, usually nodes - not edges - are assumed to fail, which is not the typical case in real power grids.

\subsection{Processing leaf data}
The leaf venation network is based on a microscopic recording of a leaf of the species \textit{Bursera hollickii} provided by the authors of Ref.~\cite{ronellenfitsch_topological_2015}. Edge weights $K_{ij}$ are assumed to scale with the radius $r_{ij}$ of the corresponding vein $(i,j)$ as $K_{ij}\varpropto r_{ij}^4$ according to the Hagen-Poisseuille law, see Ref.~\cite{coomes_scaling_2008} for a detailed discussion. We used the radius in pixel scanned at a resolution of 6400 dpi.

\makeatletter 
\renewcommand{\thefigure}{S\@arabic\c@figure}
\makeatother 
\setcounter{figure}{0}
\begin{figure*}[ht!]
    \begin{center}
        \includegraphics[width=\textwidth]{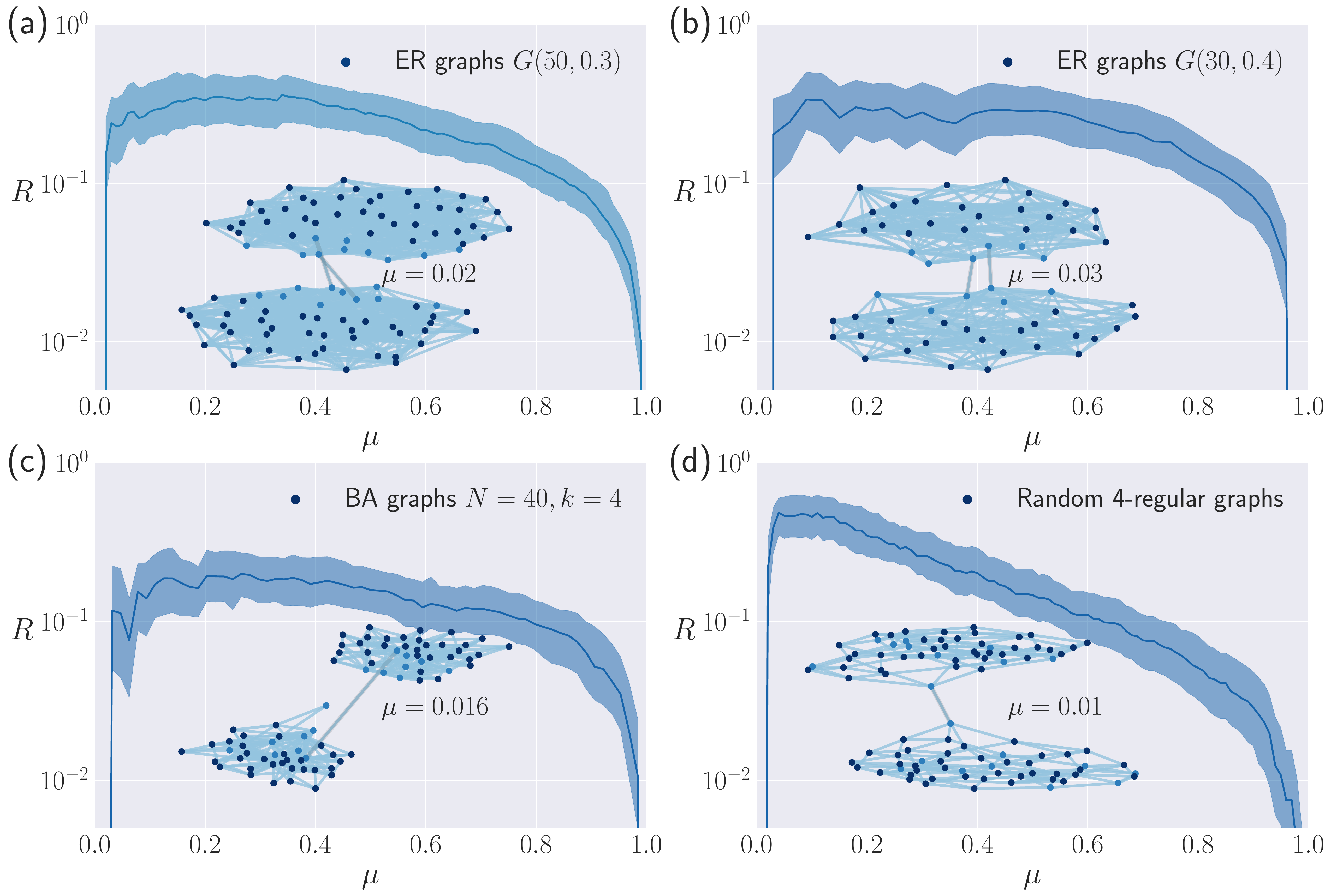}
    \end{center}
    \caption{
    \textbf{Averaged ratio of flow changes decays with high and low connectivity for different random graphs.} All panels show ratio of flow changes $R$ averaged over all links and distances against connectivity parameter $\mu$ (see methods) along with corresponding graph for low values of the connectivity parameter. (a) Two ER graphs with parameters $N_i = 50, p_i = 0.3$ connected with probability $\mu = 0.02$ at a randomly chosen share of $c = 0.2$ their nodes. (b) Same as in (A), but with parameters $N_i = 30, p_i = 0.4, \mu = 0.03, c = 0.2$. (c) A similar scaling is observed if two BA random graphs with parameters $N_i = 40, k_i=4$ are connected with probability $\mu = 0.016$ at a randomly chosen share of $c = 0.2$ their nodes. (d) The scaling is also preserved if two 4-regular, random graphs are connected with parameters $N=50,\mu=0.01,c=0.2$. Blue line represents median value over all distances and shaded region indicates $0.25$- and $0.75$-quantiles for all graphs. 
    \label{fig:ratio_links_and_d}
    }
\end{figure*}
\begin{figure*}[ht!]
    \begin{center}
        \includegraphics[width=\textwidth]{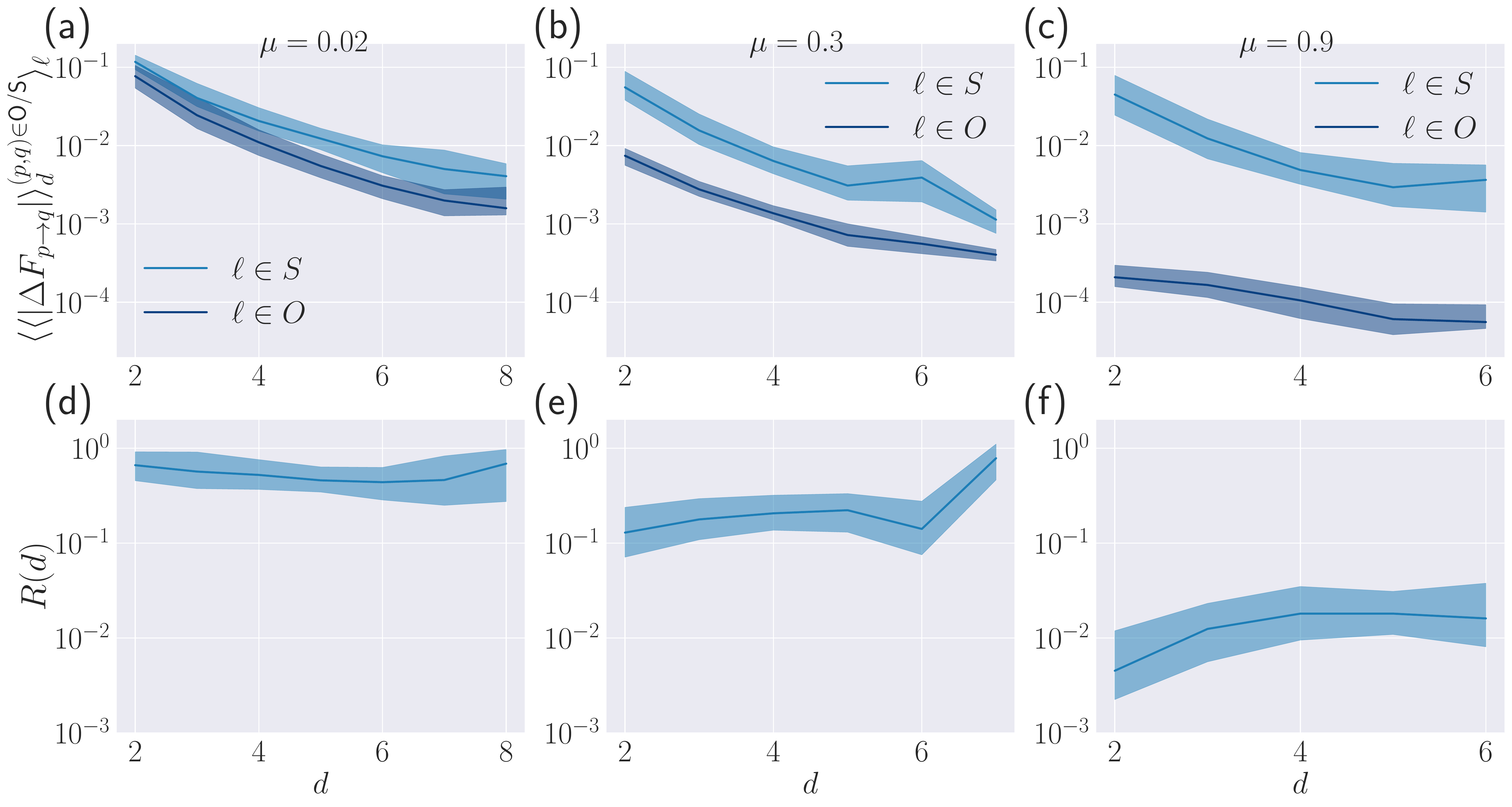}
    \end{center}
    \caption{
    \textbf{Ratio of flow changes depends weakly on distance.} We examine the scaling of link flow changes with distance for two ER random graphs $G(120,0.02)$ that are connected at $c=0.2$ nodes with changing probabilities $\mu=0.02$ (left), $\mu=0.3$ (centre) and $\mu=0.9$ (right). We only consider the largest component from each of the two random graphs and remove all dead ends as they result in vanishing flow changes. (a to c) Normalised absolute flow changes decay with distance when averaging over all possible trigger links. We always assume a unit flow on the failing link before the failure. We distinguish flow changes in the same (blue, top) and the other (purple, bottom) module of the graph. Flow changes are consistently higher in the same module for all distances. (d to f) Ratio of flow changes averaged over all possible trigger links $R(d)$ reveals a weak dependence of the ratio on distance. Blue line represents median value over all distances and shaded region indicates $0.25$- and $0.75$-quantiles for all graphs. 
    \label{fig:ratio_links}
    }
\end{figure*}
\begin{figure*}[ht!]
    \begin{center}
        \includegraphics[width=\textwidth]{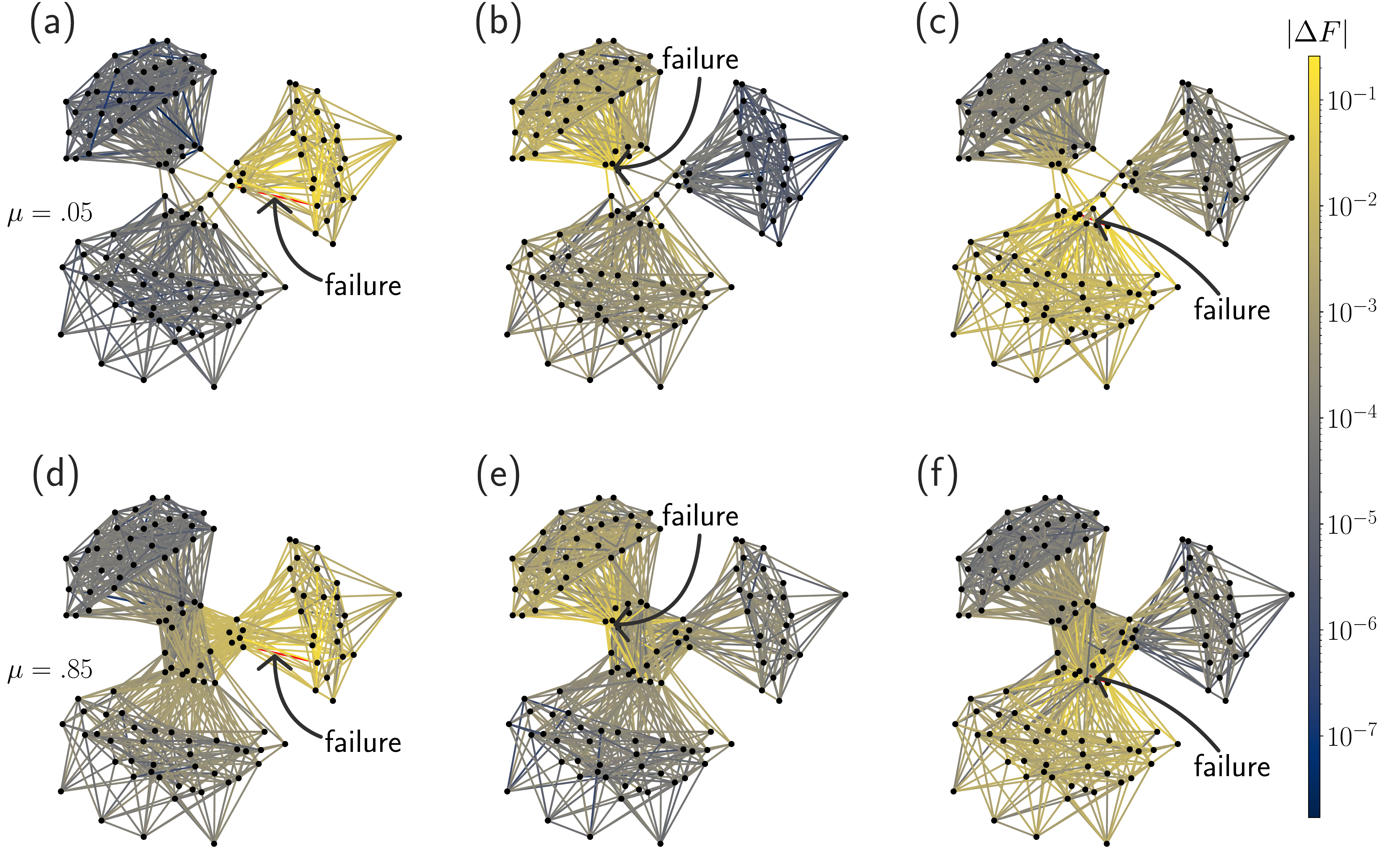}
    \end{center}
    \caption{
    \textbf{Increasing or decreasing connectivity between more than two modules reduces failure spreading equally well.} Here, we demonstrate a possible extension of the synthetic network model described in the Methods section to more than two modules. For each panel, we simulate a single link failure (red) that results in flow changes (colour coded). (a to c) Three ER random graphs $G(30,0.3)$ (right), $G(50,0.2)$ (bottom) and $G(40,0.4)$ (top left) that are mutually interconnected with probability $\mu=0.05$ at 20 percent, i.e. $c=0.2$, thus resulting in three mutually weakly connected modules. (d to f) Connecting the same modules as shown in (A to C) with probability $\mu = 0.85$, thus resulting in strong inter-module connectivity, reduces failure spreading equally well.
    \label{fig:connectivity_three_parts}
    }
\end{figure*}

\begin{figure*}[ht!]
    \begin{center}
        \includegraphics[width=\textwidth]{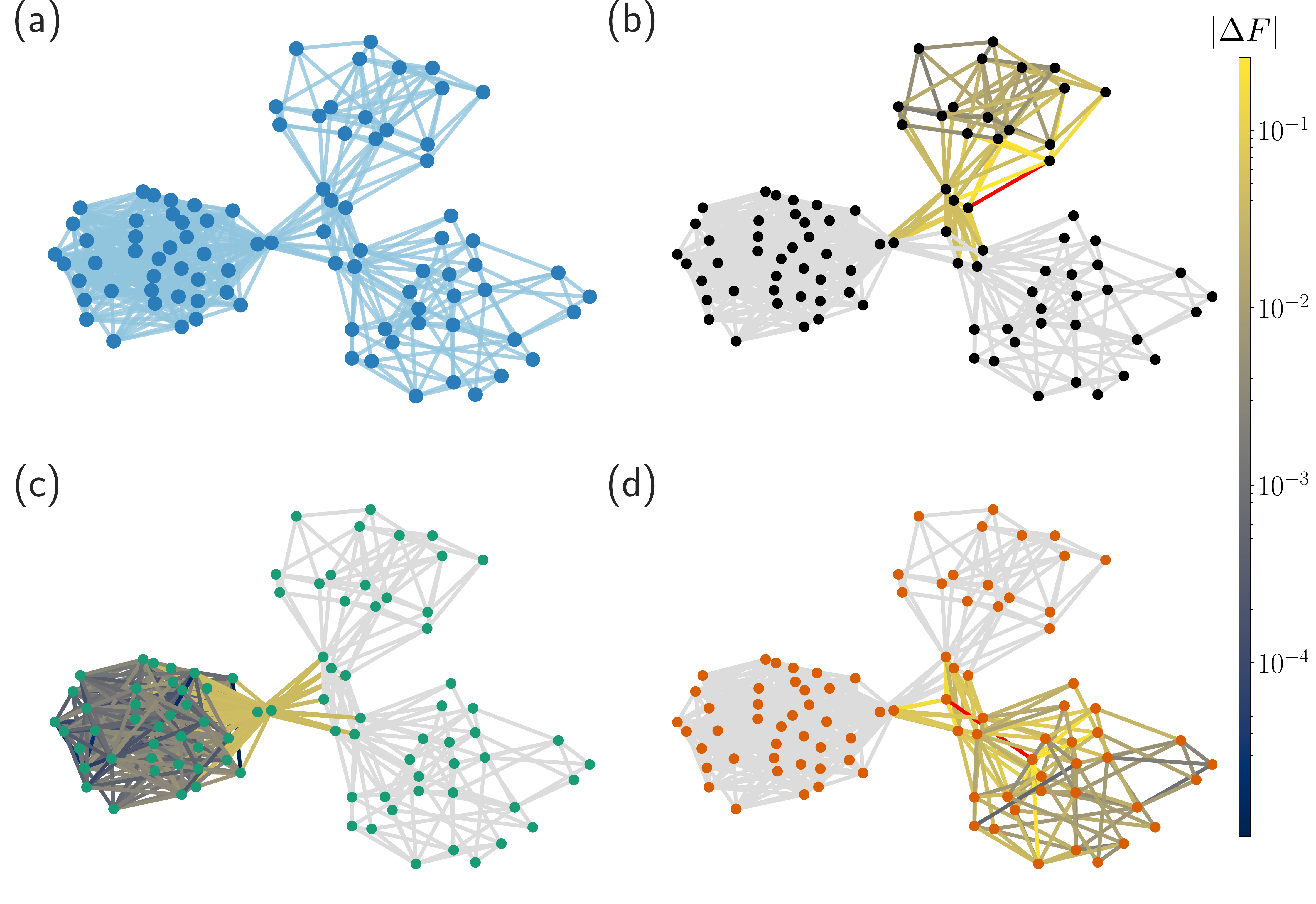}
    \end{center}
    \caption{
    \textbf{Networks isolators can be generalised to network consisting of more than two modules.}
    (a) Topology of a network consisting of three ER random graphs $G(40,0.4)$ (left), $G(20,0.3)$ (top) and $G(30,0.2)$ (bottom right) that are mutually connected through network isolators.
    (b to d) Link failures in each of the individual subgraphs (red lines) do not change flows (colour code) in any of the other subgraphs.
    \label{fig:isolator_three_parts}
    }
\end{figure*}

\begin{figure*}[ht!]
    \begin{center}
        \includegraphics[width=\textwidth]{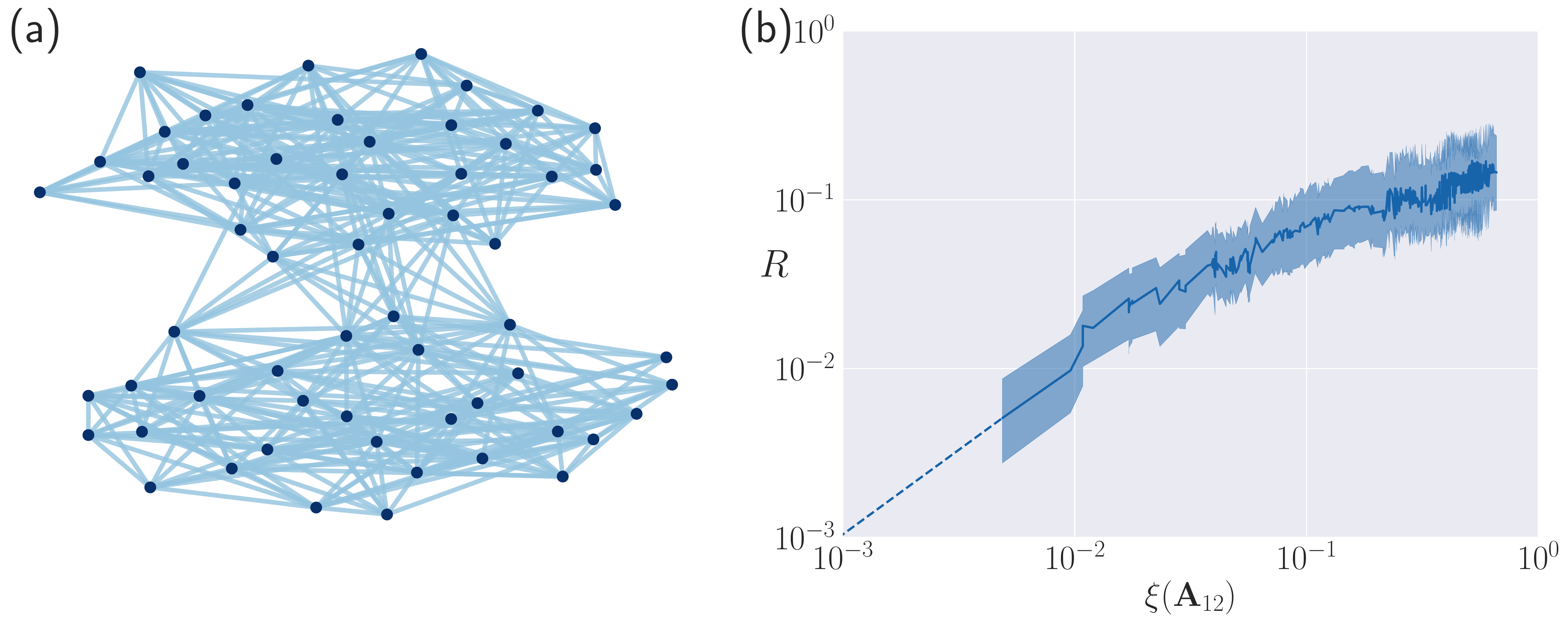}
    \end{center}
    \caption{
    \textbf{Robustness of network isolators shows the same scaling with perturbations for different graphs.}
    Robustness of network isolators measured by ratio of flow changes $R$ averaged over all links against measure of perturbations to network isolators $\xi(\mathbf{A}_{1,2})$. (a) Graph created from the graph ensemble and shown in Fig.~\ref{fig:ratio_links_and_d}C was modified in such a way that it contains a network isolator connecting five nodes from one part to five nodes of the other part through a bipartite connectivity structure. Edge weights are drawn randomly from a normal distribution $\mathcal{N}(10,1)$ except for the network isolator where the randomly chosen weights of five edges starting in the same node and connecting to all connecting nodes in the other part were chosen as basis weights for all other connections between the two parts. (b) The isolator robustness shows qualtiatively the same scaling as for the 6-regular graph shown in Fig.~1(c). Perturbations were applied in 1000 repetitions choosing a perturbation strength of $\alpha=0.05$. Dotted line takes into account the fact that the curve goes through the point $\xi=R=0$ for a perfect isolator.
    \label{fig:isolator_robustness_other_graph}
    }
\end{figure*}

\begin{figure*}[h!]
    \begin{center}
        \includegraphics[width=\textwidth]{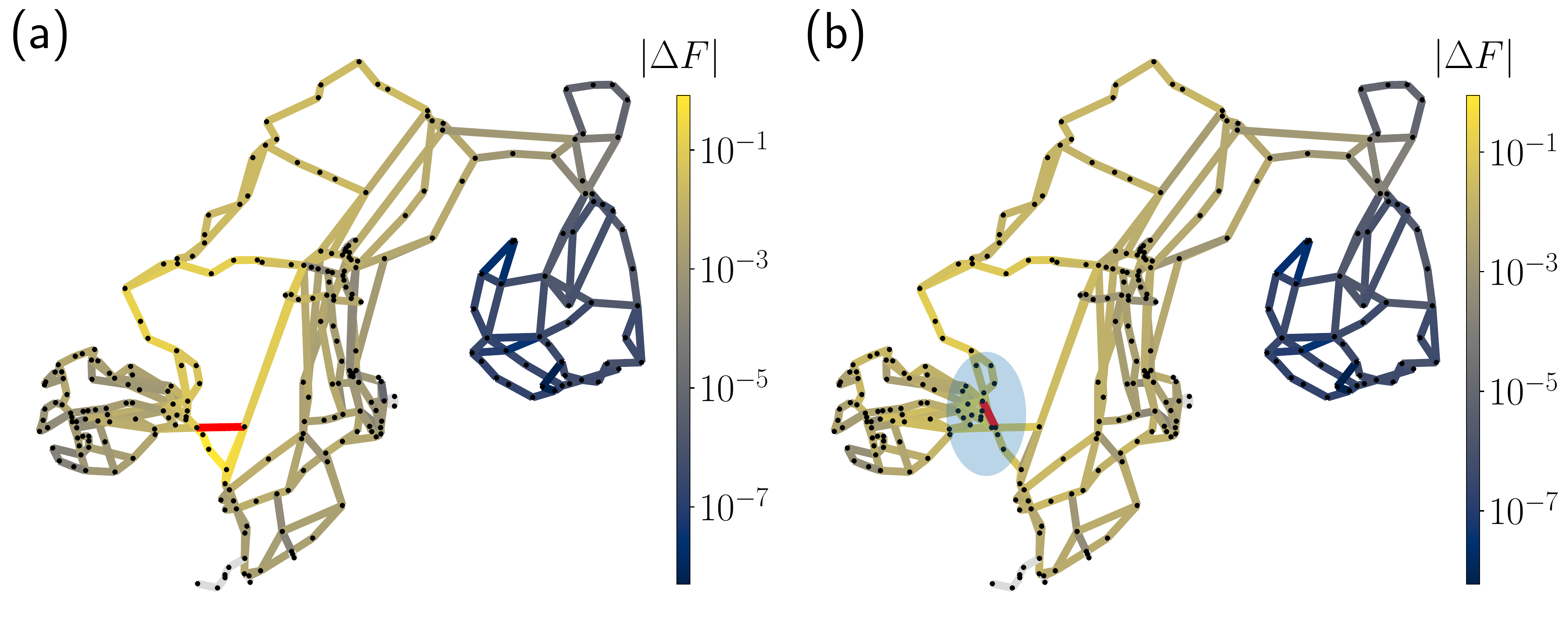}
    \end{center}
    \caption{
    \textbf{Network isolators do not increase grid vulnerability.}
    (a) Failure of a link with unit flow in the Scandinavian grid before the construction of the network isolator yields a strong response in terms of absolute flow changes $|\Delta F|$. (b) After adding two links to create a network isolator (blue shaded region, see Figure~3(c)), we simulate a failure of one of the links \emph{in} the isolator. We observe that both, the failure within the isolator (b) as well as a failure in the initial grid in close proximity to the location where the isolator is constructed (a) yield a similar effect. We thus conclude that introducing the network isolator will not make the network more vulnerable compared to the network without the isolator. However, a failure in the isolator may potentially affect the whole network.
    \label{fig:failure_in_isolator}
    }
\end{figure*}

\begin{figure*}[h!]
    \begin{center}
    \includegraphics[width=0.8\textwidth]{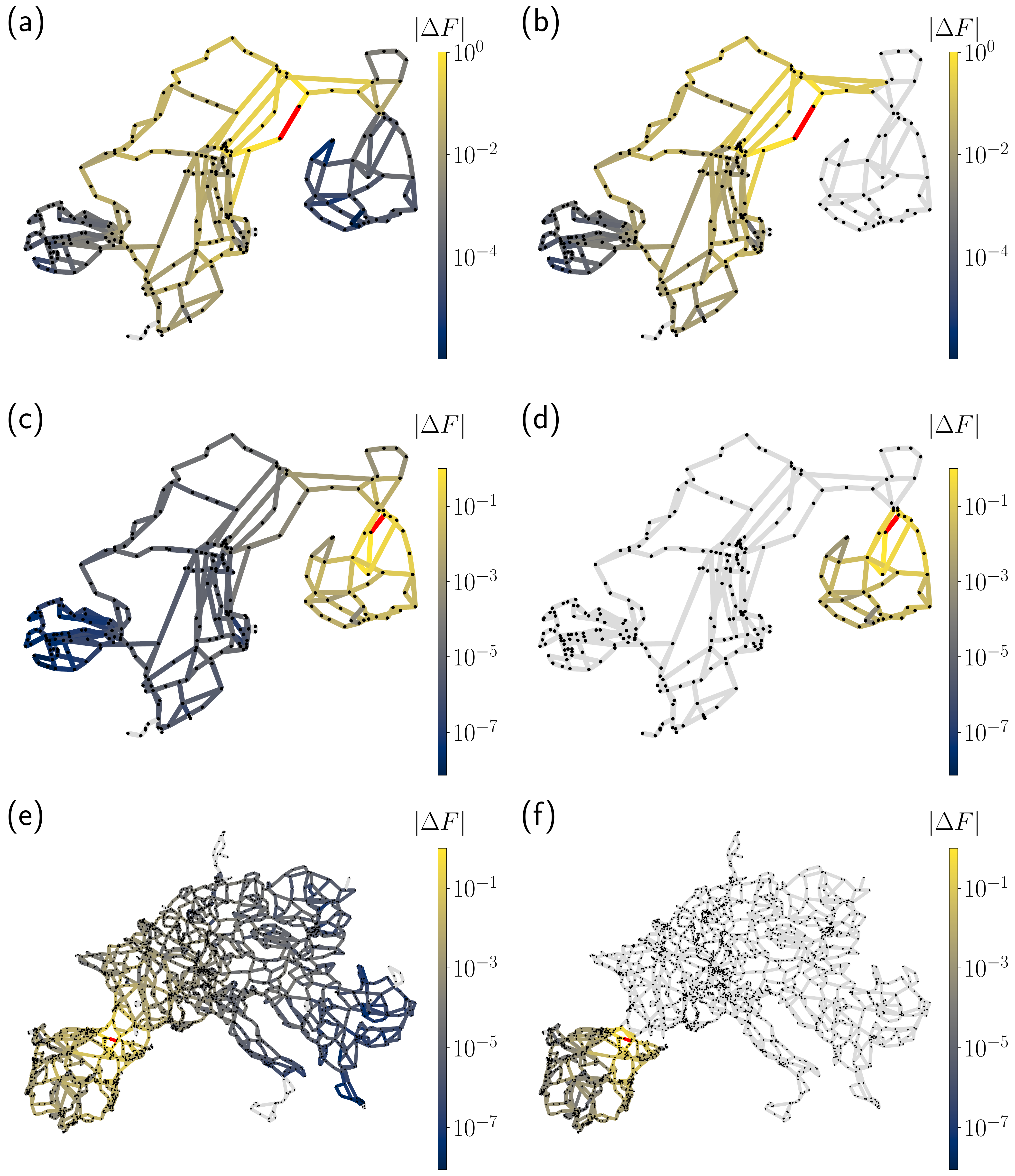}
    \end{center}
    \caption{
    \textbf{Network isolators may be realised in various real-world power grids.} All grid topologies and line susceptances were extracted from the open European energy system model PyPSA-Eur, which is fully available online\cite{horsch_2018}. (a,c,e) Initial failure of a link (red) with unit flow results in flow changes in the whole network for Scandinavia (a,c) as well as the central European grid (e). (b,d,f) After introducing network isolators to the grids, failure spreading to other parts of the network is completely stopped. The construction of isolators follows the ``recipes'' illustrated in Figure~4, namely with (b) following the construction in Fig.~4(b), (d) following the construction in Fig.~4(a), and (f) following the construction in Fig.~4(d).%
    \label{fig:isolator_examples}
    }
\end{figure*}

\begin{figure*}[h!]
    \begin{center}
    \includegraphics[width=\textwidth]{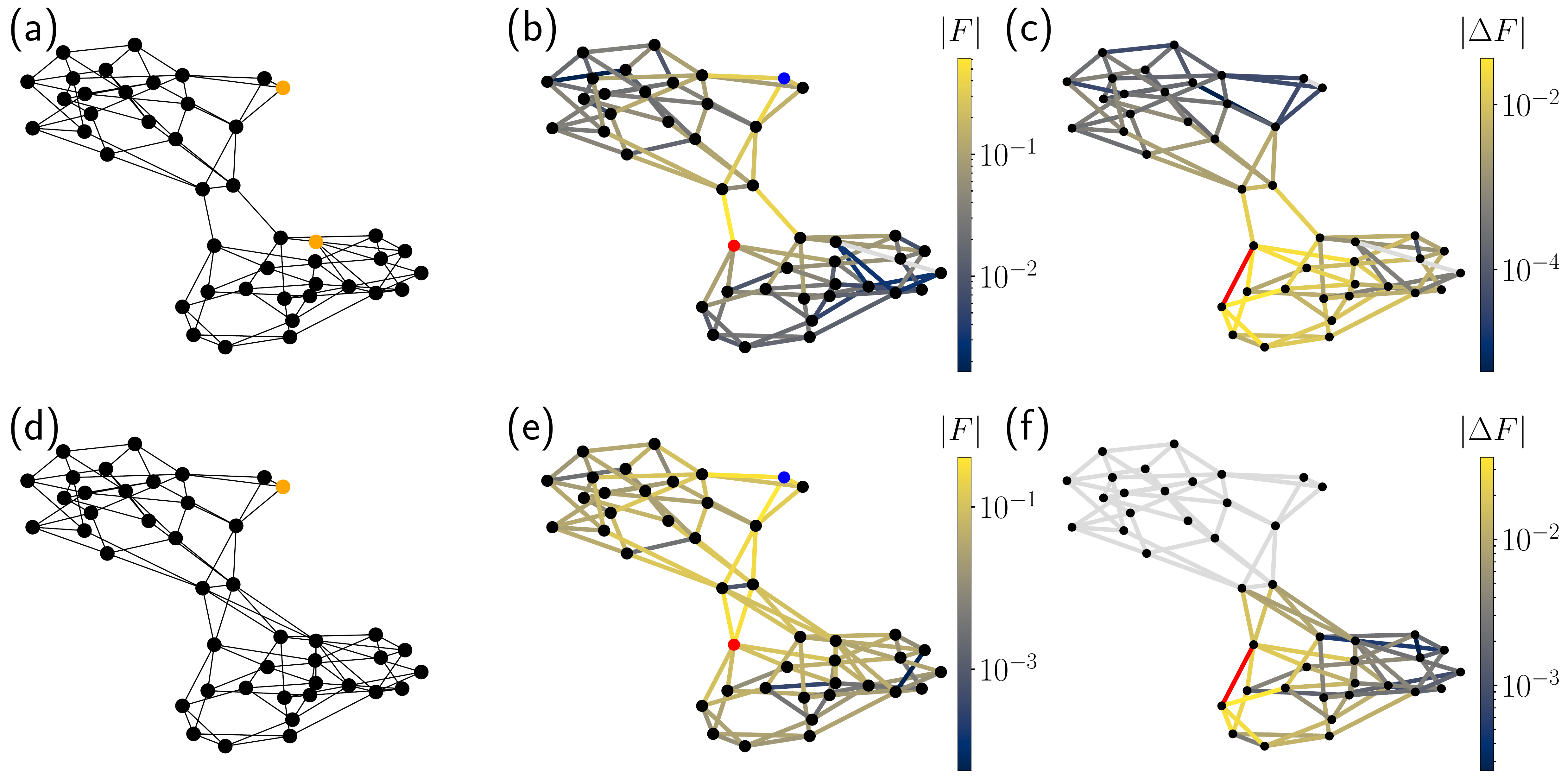}
    \end{center}
    \caption{
    \textbf{Isolators do not generally prevent the controllability of a network.} (a) An example of an undirected network with two weakly connected components that requires $N_D = 2$ driving nodes (in orange) to be controlled. This can be calculated from the graph adjacency matrix, which has, by construction, an eigenvalue $\lambda^M = -1$ with algebraic multiplicity $\delta(\lambda^M) = 2$ (See Eq.~\ref{eq:controllability_yuan} and Ref.~\cite{yuan_exact_2013}). (d) After adding a few links to create a network isolator, we have $N_D=1$ and only one node (colored orange) is necessary to control the entire network, i.e. the network isolator has in this case increased the controllability of the network. (b) We show the flows obtained by our linear flow model for a single source of power $P=1$ at the node colored in red and a single sink with $P=-1$ at the node colored in blue. The resulting (absolute) flows are color-coded: The flow can easily reach from the red node to the blue node. (e) Adding the isolator, flow can still propagate freely from the source node (red) to the target node (blue) in the same way as in panel B. Hence, the isolator does not prevent the propagation of flows. (c) Simulating the failure of a single link (red), we observe that flows do also change in the other part of the network. 
    (f) Conversely, the isolator does prevent propagation of flow changes caused by a link failure in the right part of the network to its left part.
    \label{fig:isolator_controllability}}
\end{figure*}

\end{document}